\definecolor{NiColor}{RGB}{77,77,255}
\definecolor{NiColoRed}{RGB}{255,77,77}
\definecolor{NiCitation}{RGB}{77,255,77}
\def\emptyset{\varnothing} 
\def\b1{{1\!\!1}}
\def\bC{{\mathbb C}}           
\def\bR{{\mathbb R}}
\def\bZ{{\mathbb Z}}
\def\beq{\begin{eqnarray}}
\def\eeq{\end{eqnarray}}
\newtheoremstyle{TheoremStyle}
{3pt}
{3pt}
{\slshape}
{}
{\bf}
{:}
{.5em}
{}
\theoremstyle{TheoremStyle}
\newtheorem{theorem}{Theorem}
\newtheorem{corollary}[theorem]{Corollary}
\newtheorem{proposition}[theorem]{Proposition}
\newtheorem{lemma}[theorem]{Lemma}
\newtheorem{definition}[theorem]{Definition}
\newtheorem{remark}[theorem]{Remark}
\newtheorem{example}[theorem]{Example} 
\begin{document}


\par
\bigskip
\large
\noindent
{\bf On  the global Hadamard parametrix in QFT   and  the signed squared geodesic distance  defined   in domains  larger than convex normal neighbourhoods}
\bigskip
\par
\rm
\normalsize


\noindent {\bf Valter Moretti}\\
\par

\noindent 
  Department of  Mathematics, University of Trento, and INFN-TIFPA \\
 via Sommarive 14, I-38123,  Povo (Trento), Italy.\\
 valter.moretti@unitn.it\\

 \normalsize

\par

\rm\normalsize

\noindent {\small August,  2021}

\rm\normalsize


\par

\begin{abstract} \noindent We consider the  global Hadamard condition 
and the notion of Hadamard parametrix  whose use  is pervasive in algebraic QFT in curved spacetime (see refences in the main text).
We point out the existence of a technical  problem in the literature
concerning  well-definedness  of  the  global Hadamard parametrix in normal neighbourhoods of Cauchy surfaces.
We discuss in particular  the definition of the (signed) geodesic distance $\sigma$  
and related structures  in an open  neighbourhood of the diagonal of $M\times M$ larger than  $U\times U$, for a normal convex neighborhood $U$, where $(M,g)$ is a Riemannian or Lorentzian (smooth Hausdorff paracompact) manifold.
We eventually  propose a quite natural solution  which slightly changes the original definition by B.S.  Kay and R.M. Wald and relies  upon  some non-trivial consequences of the paracompactness property. The proposed  re-formulation is in agreement with M.J. Radzikowski's  microlocal  version of the Hadamard condition.
\end{abstract}

\section{Introduction}\label{Sec: Introduction}
The use of {\em Hadamard states} is nowadays pervasive in {\em algebraic QFT}  (aQFT)  in curved spacetime (see,  e.g.,  \cite{BF96,BF00,HW,HW02,Stress,Mo08,DMP2,DMP,DFP08,Sanders,FS08,FV03,FV13},  and   \cite{book} for a recent survey on aQFT).
The rigorous definition of Hadamard state in terms of {\em short-distance behaviour} of the two-point function  was stated  in  the celebrated paper \cite{KW} by B.S. Kay and R.M. Wald for the first time. Some years later, that technically complex definition was translated into the language of microlocal analysis within  a pair of nice papers by M.J. Radzikowski \cite{Rad,Rad2}.

 The original geometric definition of \cite{KW} of a global {\em Hadamard parametrix} has been  exploited for instance  to deal with rigorous intepretations of the Hawking radiation,
see    \cite{MP} and the  recent  interesting paper \cite{KNV}. Just to mention some other  applications of the Hadamard parametrix in aQFT (the following  list of examples is by no means exhaustive) we can say that 
it plays a crucial role  in the definition of {\em locally-covariant Wick powers} \cite{HW,KM}, including  the definition of  the {\em stress-energy tensor operator}  \cite{Stress}. 
The Hadamard  parametrix has been also employed in the study of  {\em quantum energy inequalities} \cite{FS08}.  
It has been also used in {\em semiclassical approaches} to the quantum gravity and   {\em cosmological applications} \cite{DFP08,MPS}. 

Locally (and a bit roughly) speaking, in a globally-hyperbolic four-dimensional spacetime $(M,g)$, an algebraic state $\omega$ of a real scalar  Klein-Gordon quantum field 
is of {\em Hadamard type} if its {\em two-point function} $\Lambda_\omega$ has the {\em Hadamard short-distance singularity},
\beq \Lambda_\omega(x,y) = \frac{1}{(2\pi)^2}\left(\frac{\:\:\:\Delta(x,y)^{1/2}}{\sigma(x,y)} + v(x,y) \ln \sigma(x,y) \right)+ H_\omega(x,y)\label{fH}\eeq
when viewed as an integral kernel
(see Section \ref{secHcond} for some technical details here disregarded).  $H_\omega$ is a smooth function depending on  the state $\omega$, whereas $\Delta, v, \sigma$ are universal geometric objects constructed out of the local geometry only. In particular, $\sigma(x,y)$ is the so-called {\em signed squared geodesic distance} of $x,y\in M$. It is defined as the squared length  -- with the appropriate sign -- of {\em the} geodesic segment joining $x$ and $y$.  The so-called {\em Hadamard parametrix} is the singular universal part $\Lambda_\omega(x,y)- H(x,y)$.

Since there are many geodesics, in principle, joining $x$ and $y$, a standard possibility is to assume  that the identity above is true in a {\em normal convex neighbourhood}\footnote{A weaker requirement is that the identity is valid in a normal neighborhood of one of the points.} (see Section \ref{SecN}). This  is  an open set $U$ such that every pair of points $x,y\in U$ can be joined by a {\em unique} geodesic segment
$\gamma: [0,1] \to M$  {\em  that belongs to  the set}: $\gamma([0,1]) \subset U$. This elementary precaution is not enough however in the global definition discussed in \cite{KW}. It is because  (\ref{fH}) is  assumed  to be valid for  pairs $(x,y)$ contained in   {\em many} normal convex neighbourhoods. In principle this  gives rise to a cumbersome many-valued function  $\sigma$. This is one of the most difficult  technical issues tackled   in \cite{KW}. 

\begin{remark}\label{REM0} {\em If the spacetime points   $x$ and $y$ belong to  many  convex normal neighbourhoods,
different choices of such a neighbourhood may not only lead to different values for $\sigma(x,y)$ around $(x,y)$, but also to different singularity structures as determined by the Hadamard parametrix
(\ref{fH}). Therefore, in principle,  the introduced   issue   not only affects the construction of the Hadamard
parametrix, but also the singularity structure that the Hadamard condition is assumed  to describe. In particular, Hadamard states may not satisfy (\ref{fH}) in every convex normal 
neighbourhood. (See also the end of Example \ref{EX}.)}
\end{remark}

If $x$ and $y$ are {\em causally related},  a natural choice of $U$ for the given $x,y$  exists which solves the problem of the definition of $\sigma(x,y)$ and it was adopted in \cite{KW} (see Remark \ref{remJ} below for more details). $U$, if it exists,  is a normal convex neighbourhood  that {\em simultaneously includes $x,y$ and their causal  double cone $J(x,y)$} (see (\ref{JJ23})). Obviously, every other normal convex neighbourhood $U'$ which  both includes $x,y$ and $J(x,y)$ must also contain  the  causal geodesic $\gamma$ joining $x $ and $y$ in $U$. As $U'$ is convex, $\gamma$ is also {\em its own} geodesic $\gamma'$ joining $x$ and $y$ in $U'$. There is, in fact,  {\em only one} geodesic segment (parametrized in $[0,1]$) joining a pair of causally related points $x$ and $y$ in common for the subfamily of the said normal convex sets. For these pairs $(x,y)$, $\sigma(x,y)$ can be therefore unambiguously defined. 

Physics is properly reflected by the family of  causal geodesics, but a mathematically coherent definition of the Hadamard parametrix needs to consider also non-causal geodesics: the non-causal ones  ``arbitrarily close'' to  the  causal ones. For technical reasons,  in \cite{KW}  $\sigma$ was therefore  also required  to be smooth and well-defined  {\em in a neigbourhood} $\mathscr{O}$ of that special family of causally related pairs $(x,y)$. 
 We stress that the neighbourhood  $\mathscr{O}$ must also contain non-causally connected pairs.  The argument used to give a non-ambiguous definition of $\sigma$  {\em cannot} be used for those pairs. The existence of $\mathscr{O}$ with a non-ambiguous extension of the definition of $\sigma$ was   assumed  in \cite{KW} and also in \cite{Rad,Rad2} without a proof. In this author's view,  it remains a gap in the whole construction.
This work is devoted to that gap. 

  Referring to Remark \ref{REM0},  one eventually sees that the problem of the definition of $\sigma$
for non-causally connected pairs actually affects the definition of the
Hadamard parametrix in (\ref{fH}), but not its singularity structure (see Remark  \ref{remFIN}).

 We  shall not try to directly prove  the existence of that $\mathscr{O}$. Our solution  relies on a thin refinement of the definition of  Hadamard parametrix which is possible  thanks to a consequence of the paracompactness property. The final new  definition of Hadamard state, which is a quite slight modification of the original definition in \cite{KW}, though it is based on a non-trvial topological result, turns out to be in agreement with the microlocal version of the Hadamard condition.  

To achieve our  final goal,  in the first part of the paper, we shall focus  on the more abstract and mathematically-minded problem of a  well-posed  definition of $\sigma$ (and related geometric objects)  in a neighbourhood of the diagonal of $M\times M$. This issue is the core of the problem with the Hadamard parametrix, but it may have other applications in mathematical physics, so that it deserves a separate study. 

\begin{example}\label{EX}
{\em A concrete  elementary illustration of the problems one faces  when trying to define $\sigma(x,x')$ in a non-trivial spacetime  is the following one\footnote{This illustration is a straightforward  re-elaboration of an example proposed by  C.J. Fewster to the author  in  a private communication.}.
Consider the spacetime $(M,g)$ constructed out of the  $1+1$ Minkowski spacetime periodically identified under  $(t,x) \mapsto (t,x+2L)$ ($c=1$).  Differently from the Minkowski space, $M$ is not normal convex in its own right. {\em To define $\sigma(x,x')$, one is therefore forced to  make a choice of a normal convex open set containing $x$ and $x'$.}
 
Let $x =(0,L/2)$ and $x'=(L/2,L) \equiv (L/2, -L)$. These points are causally related and $J(x,x')$ (a null line segment) can be thickened up to become a normal convex neighbourhood $U$.
We wish to define the function $\sigma$ near $(x,x')$.
Let us first consider nearby points that are still causally connected.
In particular, let $y=x$ and $y' = (L/2 , L-\epsilon) = (L/2, -L-\epsilon)$  with $0<\epsilon \ll L$. These are causally related and near to $(x,x')$. In the considered case, we can  also assume, enlarging $U$ if necessary, that $U \supset J(y,y')$.
 We can create  further normal convex neighbourhoods which include $(y,y')$ by
\begin{itemize}

\item[(i)] thickening the line segment (in $\bR^2$) between $(0,L/2)$ and $(L/2, L-\epsilon)$  which is a timelike geodesic between $y$ and $y'$.
It produces a convex neighbourhood $U'$ which we can assume to satisfy again $U' \supset J(y,y')$. 
The line segment between $(0,L/2)$ and $(L/2, L-\epsilon)$  belongs to both $U$ and $U'$.  Consequently,  $\sigma(y,y')$ referred to $U$ must coincide with $\sigma(y,y')$ referred to $U'$.

\item[(ii)] thickening the line segment between $(0,L/2)$ and $(L/2, -L-\epsilon)$ which is a spacelike geodesic between $y$ and $y'$.  In this case we obtain a value for $\sigma(y,y')$ different from the one computed in $U$.
\end{itemize}
This illustrates why, when defining $\sigma(y,y')$ in the causally connected case, the condition on the geodesically convex neighbourhood that it contains $J(y,y')$ permits to select a common notion of  distance.

Next, consider the causally disconnected case.   Let $y=x=(0,L/2)$ and $y' = (L/2 , L+\epsilon) = (L/2, -L+\epsilon)$ with $0<\epsilon \ll L$. These are not causally connected.
We can still create normal convex neighbourhoods containing $y$ and $y'$ by thickening the line segment between  $(0,L/2)$ and $(L/2 , L+\epsilon)$ or the one 
between $ (L/2, -L+\epsilon)$ and  $(0,L/2)$ giving spacelike geodesics of differing lengths and different values of $\sigma(y,y')$.
There are actually infinitely many other possibilities that can be obtained using other image points $(L/2,(2n+1)L+\epsilon)$, $n \in \bZ$,  throughout a very slim thickening of these segments, wrapping on the cylinder without any self-intersection.

An explicit  illustration of the content of Remark \ref{REM0}  arises by referring to the example  above
and considering the timelike related points $p = (0, 0)$ and $p_0 = (2L, 0)$, which can also be
connected by null geodesics. Thickening a timelike or a null geodesic from $p$ to $p_0$ to a convex
normal neighbourhood leads to distinct singularity structures.}
\end{example}

\section{Extension of the signed squared geodesic distance and related structures}\label{SecN}
Smooth manifolds are hereafter assumed to be Hausdorff and paracompact\footnote{For a topological space locally homeomorphic to $\bR^n$ and equipped with a smooth atlas,
(a) the existence of a smooth Lorentz metric implies paracompactness (this is standard in the Riemannian case and it remains true with a more elaborated proof in the indefinite case), but not Hausdorffness;
(b) when the space is connected, 2nd countability and paracompactness are equivalent.
Thus, the usual requirements for spacetimes (existence of a Lorentz metric
and connectedness) imply that all the subtle required topological conditions
hold (paracompactness, 2nd countability, Lindel\"off) but Hausdorfness, which
must be imposed explicitly.}.  We adopt   the Lorentzian  signature $(-,+, \cdots, +$) and we follow \cite{ONeill} concerning basic defintions,  notation and results  in  the theory of Lorentzian manifolds  (see \cite{Min} for  an up-to-date general review).
\subsection{Normal convex sets and the local definition of the (signed) squared geodesic distance $\sigma$}

Following Chapter 5 of  \cite{ONeill},  $\exp :  \mathcal{D}\subset  TM \to M$ will denote the standard {\em exponential map} associated to the geodesic flow of $g$ of a smooth Riemannian or Lorentzian manifold $(M,g)$. Its  maximal domain   $\mathcal{D}$ is a fiberwise starshaped  open neighborhood of the zero section of $TM$,
and $\exp_p(v) := \exp(p,v)$ if $(p,v) \in  \mathcal{D}$. \\
\begin{definition}\label{defconvex}
{\em If $(M,g)$ is a Riemannian or Lorentzian smooth manifold, a 
 {\bf normal convex neighbourhood}  $U$ --  also known as 
{\bf normal convex open set} --  is an open  set $U\subset M$ such that, for every $q\in U$, 
there is a starshaped open neighbourhood $V^{(U)}_q$ of the origin of $T_qM$  such  that $\exp_q : V^{(U)}_q \to U$  is a diffeomorphism.}\\
\end{definition}

\begin{remark}\label{rem1}
{\em \begin{itemize}
\item[(1)] $U$ as above is {\bf geodesically starshaped}  with respect to every $p\in U$: for every other $q\in U$ there is only one geodesic segment $\gamma^{(U)}_{pq} : [0,1] \to M$ such that both $\gamma^{(U)}_{pq}(0)=p$, $\gamma^{(U)}_{pq}(1)=q$ and $\gamma^{(U)}_{pq}([0,1]) \subset U$ are valid (Proposition 31 in chapter 3 of \cite{ONeill}). By definition of $\exp_p$, it holds $\gamma^{(U)}_{pq}(t) = \exp_p(tv_q)$ where 
$v_q := \exp_p^{-1}(q)$ and $t\in [0,1]$.
\item[(2)] As  $U$ is geodesically starshaped with respect to every point $p\in U$,  it is not difficult to prove that  $V^{(U)}_p$ in Def. \ref{defconvex} is completely determined by $U$ and $p\in U$. 

\item[(3)] The set  $\cup_{p\in U} \{p\}\times V^{(U)}_p \subset \mathcal{D}$ is open in $TM$. This is   because the differential of the bijective map $U\times U \ni (p,q) \mapsto (p, (\exp_p|_{V_p^{(U)}})^{-1}(q)) \in \cup_{p\in U} \{p\}\times V^{(U)}_p$ is everywhere non-singular so that the map is open.  $\exp : \cup_{p\in U} \{p\}\times V^{(U)}_p \to U\times U$ is the inverse diffeomorphism and thus it is smooth (see in particular Lemma 9, Chap 5 \cite{ONeill} and comments around it).\\
\end{itemize}}
\end{remark}

A crucial result by Whitehead\footnote{The definition of normal convex neighbourhoods and Whitehead's result are more  generally true for smooth manifolds equipped with smooth affine connections \cite{KN}, however in this paper we stick to the smooth Levi-Civita connection generated by $g$.}
 proves that  (Propositition 7 Chapter 5 of \cite{ONeill} and its  proof.)\\

\begin{theorem}\label{teo01} For a Riemannian or Lorentzian smooth manifold $(M,g)$,  the family of  normal  convex open sets is not empty and forms a topological basis of $M$.\\
\end{theorem}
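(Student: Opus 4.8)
The plan is to prove Whitehead's theorem by constructing, around an arbitrary point $p \in M$, a family of normal convex neighbourhoods that shrinks to $p$; since the statement that they form a topological basis then follows by checking that for any open $W \ni p$ some member of the family sits inside $W$, and non-emptiness is immediate. The natural tool is a careful analysis of the exponential map together with an auxiliary Riemannian metric (which exists by paracompactness, as recalled in the footnote at the start of Section \ref{SecN}); the auxiliary metric is needed only to measure sizes of balls, not for the geodesics, which remain those of $g$.

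First I would fix $p$, choose a chart and a normal coordinate system centred at $p$, and let $E(q,v) := (q, \exp_q v)$ be the map from (an open neighbourhood of the zero section of) $TM$ to $M \times M$. Its differential at $(p,0)$ is invertible (the two ``copies'' of $T_pM$ appearing in the target decompose as the diagonal plus the vertical direction), so by the inverse function theorem $E$ is a diffeomorphism from a neighbourhood of $(p,0)$ onto a neighbourhood of $(p,p)$ in $M \times M$. Next I would introduce a smooth function $F(q) := $ (auxiliary-Riemannian) distance-squared-like quantity, more precisely work with coordinate balls $B_\delta$ and track two competing requirements: (a) $\exp_q$ restricted to a ball of radius $r(q)$ in $T_qM$ should be a diffeomorphism onto its image, with the image containing $B_\delta$ whenever $q \in B_\delta$ and $\delta$ is small relative to $r$; (b) the geodesic between any two points of $B_\delta$ should stay inside $B_\delta$. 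Requirement (b) is handled by the classical convexity-radius argument: along a geodesic $\gamma$ contained in a small coordinate ball, the second derivative of $|\gamma(t) - p|^2$ (coordinate norm) is positive once the Christoffel symbols are small enough, i.e. once the ball is small enough; hence that function has no interior maximum and $\gamma$ cannot leave the ball. Combining (a) and (b), one shows that for all sufficiently small $\delta$ the ball $B_\delta$ is a normal convex neighbourhood in the sense of Definition \ref{defconvex}: for each $q \in B_\delta$ the set $V_q^{(B_\delta)} := (\exp_q)^{-1}(B_\delta)$, taken in the component on which $\exp_q$ is a diffeomorphism, is starshaped (by (b), each radial segment maps into $B_\delta$) and $\exp_q$ is a diffeomorphism onto $B_\delta$.

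The key steps in order are therefore: (1) reduce to producing arbitrarily small normal convex neighbourhoods of a fixed point; (2) invoke the inverse function theorem for $E(q,v)=(q,\exp_q v)$ near $(p,0)$ to get joint smooth invertibility; (3) fix normal/coordinate balls and establish the convexity (no-escape) estimate via the sign of the Hessian of the coordinate radius function along geodesics, using smallness of the Christoffel symbols; (4) check that for small $\delta$ the preimages $V_q^{(B_\delta)}$ are starshaped and $\exp_q$ is the required diffeomorphism, so $B_\delta$ satisfies Definition \ref{defconvex}; (5) conclude the basis property and non-emptiness. The main obstacle is step (3)–(4): one must simultaneously control, uniformly for $q$ ranging over the shrinking ball, both that $\exp_q$ has no conjugate points / stays a diffeomorphism on a ball of definite size and that geodesics between nearby points do not wander out; the quantitative bookkeeping (choosing $\delta$ small enough as a function of the sup of the Christoffel symbols and their derivatives on a fixed compact neighbourhood of $p$) is where the real work lies. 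Everything else is formal. In practice, rather than redo this, I would simply cite Proposition 7 of Chapter 5 of \cite{ONeill} and its proof, which carries out exactly this estimate, and the preceding Lemma 9 of that chapter for the smoothness of $\exp$ on the relevant open set; the footnote's remark that the result extends to affine connections (see \cite{KN}) indicates that no metric-specific feature beyond smallness of the connection coefficients is used.
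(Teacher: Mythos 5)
Your proposal is correct and matches the paper's treatment: the paper offers no independent proof of Theorem \ref{teo01}, citing exactly Proposition 7 of Chapter 5 of \cite{ONeill} (and the surrounding Lemma 9), which is the same classical Whitehead argument you sketch — inverse function theorem for $(q,v)\mapsto(q,\exp_q v)$, the convexity estimate via positivity of the second derivative of the coordinate radius squared along geodesics for small Christoffel symbols, and starshapedness of the preimages $V_q^{(B_\delta)}$. Your closing observation that only smallness of the connection coefficients is used (so the result holds for general affine connections, including the Lorentzian case) is also consistent with the paper's footnote.
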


Among other important constructions, the Whitehead theorem and the properties of $\exp$ allow one to define the so-called 
{\bf (signed) squared geodesic distance} also known as
{\em  Synge's world function}. If $U$ is an open  normal convex set in $(M,g)$, 
\beq\label{defsigma}
\sigma_U(p,q) :=  g_p\left( \dot{\gamma}^{(U)}_{pq}(0),  \dot{\gamma}^{(U)}_{pq}(0)\right)
= \pm\left( \int_0^1 \sqrt{\left|g\left( \dot{\gamma}^{(U)}_{pq}(t),  \dot{\gamma}^{(U)}_{pq}(t)\right)\right| }dt\right)^2 \quad \mbox{for}\quad  p,q \in U
\eeq 
where the sign $-$ appears only if $g$ is Lorentzian and $\gamma^{(U)}_{pq}$ is timelike.
This function is  smooth in $U\times U$ 
because $\gamma^{(U)}_{pq}(t) = \exp_p\left(t (\exp_p|_{V_p^{(U)}})^{-1}(q)\right)$ is smooth in $[0,1]\times U\times U$, it being the restriction to $[0,1]\times U\times U$ of the  composition of three smooth functions (defined on open sets):  $\exp : \mathcal{D} \to M$, the multiplication with $t$, and a component of  $U\times U \ni (p,q) \mapsto (p, (\exp_p|_{V_p^{(U)}})^{-1}(q)) \in \cup_{p\in U} \{p\}\times V_p$ according to  (3) in Remark \ref{rem1}.

 It is evident that $\sigma_U(p,q)$ strictly depends on the choice of the normal convex neighborhood containing the points $p,q$. If there were another normal convex neighborhood $U'\ni p,q$, in general $\sigma_U(p,q) \neq \sigma_{U'}(p,q)$ because the two sides refer to generally different geodesic segments: one stays in $U$ and the other stays in $U'$, though both geodesics join $p$ and $q$.
 This fact prevents one  from defining $\sigma$ as a global smooth function  over $M\times M$. 

\subsection{Assignment of geodesics around the diagonal of $M\times M$ and extension of $\sigma$ thereon}\label{SectS}
A natural   issue which pops out at this juncture  is
{\em whether or not $\sigma$ can be more globally defined,  at least in an open neighbourhood $\mathscr{A}$ of the {\bf diagonal}
$\Delta_M := 
\{(p,p) \:|\: p \in M\}$ of 
  $M\times M$.} 

The root of the problem is that, generally speaking,  there are many geodesics connecting a pair of points $p,q$ and $\sigma(p,q)$ depends on the choice of one of those curves. One  restricts  to work in a ``small'' neighbourhood $\mathscr{A}$ of the diagonal of $M\times M$ because it seems that the choice should be easier if $p$ and $q$ are close to each other. (There are however results concerning really global definitions of $\sigma$, on the whole $M\times M$, when assuming suitable hypotheses on the topology of $M$ \cite{MC}.)
To address the issue above,  one may therefore wonder {\em if it is possible to define a jointly  smooth assignment of geodesic segments 
$\gamma_{pq}(t) = \Gamma(t,p,q)$ where  $t\in [0,1]$, $(p,q)$ varies in a neighborhood $\mathscr{A}$ of $\Delta_M$ and $\gamma_{pq}(0)=p$, $\gamma_{pq}(t)=q$.} Indeed, equipped with such an assignment, $\sigma$ can be defined on $\mathscr{A}$ by direct use of (\ref{defsigma}).\\
%

\begin{remark}  {\em If $(M,g)$ is Riemannian and its {\em injectivity radius} is positive, then other known  ways exist to define a (smooth) notion of squared geodesic distance in a neighbourhood of the diagonal of $M\times M$ (see, e.g.,\cite{Shubin} for the case of a bounded geometry manifold in particular). However, we refer here to  the general case where $(M,g)$ may be  Lorentzian, or  Riemannian with zero injectivity radius.}\\
\end{remark}

An  idea to construct $\Gamma$ and $\sigma$ in $\mathscr{A}$ (see  also the discussion on p.131 of \cite{ONeill}) relies on the insight   that sufficiently small normal convex neighbourhoods are expected to have  intersections which are  normal convex as well. In that case, if $U\cap U'$ is convex and both $p,q \in U$, $p,q\in U'$, then the unique geodesic segment  $ \Gamma_U(t,p,q)  
:= \gamma^{(U)}_{p,q}(t) 
\in U$, $t\in [0,1]$,   joining them in $U$ coincides with the analogue $\Gamma_{U'}(t,p,q) :=
 \gamma^{U'}_{p,q}(t)  \in U'$ joining $p$ and $q$ in  $U'$, since it is the unique geodesic segment joining $p$ and $q$ in $U\cap U'$.  Therefore
$\Gamma_{U}(t,p,q) = \Gamma_{U\cap U'}(t, p,q) = \Gamma_{U'}(t, p,q)$. If a covering $\mathcal{C}$ of $M$ exists made of normal convex open sets such that $U,U' \in \mathcal{C}$ implies that $U\cap U'$ is convex as well, then a jointly smooth assignment of geodesic segments  $\Gamma :[0,1] \times  \mathscr{A} \to M$
joining the arguments $(p,q) \in \mathscr{A}$ (i.e., $\Gamma(0,p,q)=p$ and $\Gamma(1,p,q)=q$)
 is well-defined and smooth on the open neighbourhood $\mathscr{A}$ of $\Delta_M$. It suffices to define 
\begin{align} &\mathscr{A} := \bigcup_{U \in \mathcal{C}}U\times U\label{defA}\\
&\quad \mbox{if $(x,y) \in \mathscr{A}$,}\quad \Gamma(t, x,y) := \Gamma_{U}(t,x,y) \quad \mbox{where $U\in \mathcal{C}$ is such that $x,y\in U$.}\label{defsigma2}
\end{align}
Indeed, if $(x,y) \in \mathscr{A}$, then there must exist $U\in \mathcal{C}$ such that $x,y \in U$.
Next, the right-hand side of (\ref{defsigma}) is well defined, since it does not depend on $U$ if there are other  elements in $\mathcal{C}$ containing $x,y$
 as pointed out above.  $\Gamma$ is also jointly smooth on $\mathscr{A}$ because it is locally jointly smooth. 
In this way, an associated  signed  squared geodesic distance  $\sigma : \mathscr{A} \to \bR$ results  to be well-defined and smooth because it is a  composition of  smooth functions:
\beq \sigma(p,q) := g_p\left( \frac{\partial \Gamma}{\partial t}|_{(0,p,q)},  \frac{\partial \Gamma}{\partial t}|_{(0,p,q)}\right)\:, \quad (p,q) \in \mathscr{A}\:.\label{defSIGMA}\eeq

\begin{definition} If $(M,g)$ is a smooth Riemannian or Lorentzian manifold,  a 
  {\bf strongly convex covering} of $M$ is a
covering ${\cal C}$  of $M$ made of normal convex open sets such that $C\cap C'$ is normal convex if
$C,C'\in {\cal C}$ and $C\cap C' \neq \emptyset$.\\
\end{definition}

Existence of a strongly convex covering ${\cal C}$ is guaranteed when explicitly assuming Hausdorff and  paracompactness hypotheses on $M$\footnote{This idea is sketched in Lemma 10 of Chapter 5 of \cite{ONeill} unfortunately with very few details and without explicitly  referring to the   crucial topological result of Theorem \ref{teoparac}.}.
In fact,  paracompactness possesses  an  important technical feature discovered by  A.H. Stone \cite{Stone} (see also  \cite{Michael}). \\

\begin{theorem} \label{teoparac}  A  topological space $X$ is Hausdorff and paracompact if and only if it is $T_1$ and  every covering ${\cal C}$ of $X$ made of open sets   admits a {\bf $^*$-refinement} of open sets. That is another covering ${\cal C}^*$ of open sets such that, for every $V \in {\cal C}^*$, $$\bigcup \{ V' \in {\cal C}^* \:|\: V' \cap V\neq \emptyset\}\subset U_V\quad \mbox{for some $U_V \in {\cal C}$.}$$
\end{theorem}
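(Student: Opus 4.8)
The plan is to prove the two implications of the biconditional separately; the only genuinely hard one is the classical theorem of A.H. Stone that a $T_1$ space in which every open cover admits an open $^*$-refinement (a ``fully normal'' space) is paracompact. Throughout, for an open cover $\mathcal B$ and a set $A$ write $\mathrm{st}(A,\mathcal B):=\bigcup\{B\in\mathcal B : B\cap A\neq\emptyset\}$, so a $^*$-refinement $\mathcal C^{*}$ of $\mathcal C$ is an open cover with $\mathrm{st}(V,\mathcal C^{*})$ inside some member of $\mathcal C$ for each $V\in\mathcal C^{*}$; call $\mathcal D$ a \emph{barycentric refinement} of $\mathcal C$ if merely $\mathrm{st}(\{x\},\mathcal D)$ lies in a member of $\mathcal C$ for every $x$. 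I will use the elementary fact that a barycentric refinement of a barycentric refinement is a $^*$-refinement.

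For ``$X$ Hausdorff and paracompact $\Rightarrow$ $X$ is $T_1$ and fully normal'': $T_1$ is immediate, and I would first recall the standard chain ``paracompact $+$ Hausdorff $\Rightarrow$ regular $\Rightarrow$ normal'' (separate a point or closed set from a disjoint closed set using Hausdorffness, pass to a locally finite open refinement of the resulting cover, and use that for a locally finite family the closure of the union is the union of the closures). Given an open cover $\mathcal C$, the steps are then: (i) take a locally finite open refinement $\mathcal V=\{V_\alpha\}$; (ii) shrink it by normality to an open cover $\{G_\alpha\}$ with $\overline{G_\alpha}\subset V_\alpha$ and, for $x\in X$, set
\[
E_x:=\bigcap\{V_\alpha : x\in\overline{G_\alpha}\}\ \setminus\ \bigcup\{\overline{G_\alpha} : x\notin\overline{G_\alpha}\};
\]
local finiteness of $\{\overline{G_\alpha}\}$ makes the intersection finite and the union closed, so $\{E_x\}_{x\in X}$ is an open cover, and a one-line check (if $p\in G_{\alpha_0}$ then $E_x\subset V_{\alpha_0}$ for every $x$ with $p\in E_x$) shows it to be a barycentric refinement of $\mathcal V$; (iii) apply (ii) to a locally finite open refinement of $\{E_x\}$ (available since $X$ is paracompact) to get a barycentric refinement of $\{E_x\}$, which by the fact above is a $^*$-refinement of $\mathcal V$, hence of $\mathcal C$.

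For the converse, Hausdorffness is quick: for $x\neq y$ the pair $\{X\setminus\{x\},X\setminus\{y\}\}$ is an open cover by $T_1$, and if $\mathcal W$ is an open $^*$-refinement with members $W_x\ni x$ and $W_y\ni y$, then $W_x\cap W_y=\emptyset$, since otherwise $\mathrm{st}(W_x,\mathcal W)$ would contain both $x$ and $y$ while lying in one of the two sets. Paracompactness is the core. Given an open cover $\mathcal U$, I would iterate the hypothesis to build open covers $\mathcal U=\mathcal V_0,\mathcal V_1,\dots$ with $\mathcal V_{n+1}$ a $^*$-refinement of $\mathcal V_n$; writing $S_n(x):=\mathrm{st}(\{x\},\mathcal V_n)$ this gives the ``ball-like'' relations $S_{n+1}(x)\subset S_n(x)$ and $y\in S_{n+1}(x)\Rightarrow S_{n+1}(y)\subset S_n(x)$. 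The cleanest continuation is pseudometrisation: the Alexandroff--Urysohn lemma converts $(\mathcal V_n)$ into a pseudometric $d$ on $X$ with $S_{n+1}(x)\subset\{y:d(x,y)\le 2^{-n}\}$ and $\{y:d(x,y)<2^{-n}\}\subset S_n(x)$; the first inclusion shows that every $d$-ball is open in $X$, the second that the $d$-balls of radius $2^{-n}$ refine $\mathcal V_{n-1}$, hence $\mathcal U$. Invoking that (pseudo)metric spaces are paracompact --- part of Stone's theorem but provable on its own, e.g., by M.E. Rudin's well-ordering argument --- a $d$-locally-finite refinement of this ball cover consists of $d$-open (hence $X$-open) sets, with $d$-neighbourhoods, which are $X$-neighbourhoods, witnessing local finiteness; it is therefore a locally finite open refinement of $\mathcal U$ in $X$. (Alternatively one runs Stone's own direct construction: well-order the index set of $\mathcal U$ and, at level $n$ and index $\alpha$, collect the $S_n(x)$ over the points $x$ for which $S_{n-1}(x)$ first fits inside $U_\alpha$, trimming at the next level; the two relations above are precisely what force the outcome to be a locally finite refinement.)

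The hard part is exactly this last implication: extracting a \emph{locally finite} open refinement from the purely combinatorial $^*$-refinement hypothesis, with the real weight carried either by the transfinite trimming in Stone's direct argument or, equivalently, by the independently non-trivial theorem that metric spaces are paracompact, hidden inside the pseudometrisation route; everything else (normality of paracompact Hausdorff spaces, the barycentric-to-star fact, the Hausdorff separation) is routine. Since the statement is the theorem of \cite{Stone} (see also \cite{Michael}), in the paper I would present this sketch and refer there for the full construction.
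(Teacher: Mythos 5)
The paper offers no proof of this statement at all: it is quoted as a classical result of A.H.\ Stone with the references \cite{Stone} and \cite{Michael} standing in for the argument. Your sketch is a correct outline of that standard proof (the barycentric-refinement construction, the ``barycentric of barycentric is a $^*$-refinement'' lemma, the Hausdorff separation, and the pseudometrisation or direct transfinite route for full normality $\Rightarrow$ paracompactness), and since you too ultimately defer the heavy lifting to the same references, your treatment is in substance the same as the paper's.
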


(Notice that $V \subset U_V \in {\cal C}$ in particular, so that a  $^*$-refinement is a refinement as well).
This theorem implies the existence of the desired well-behaved  covering $\mathcal{C}$ of normal convex open sets of $(M,g)$ (see also Lemma 10 in Chapter 5 of \cite{ONeill}).\\

\begin{proposition}\label{propintconv} Let $(M,g)$ be  a smooth  (Hausdorff paracompact)  Riemannian or Lorentzian manifold   and ${\cal A}$ a covering of $M$ made of open sets (possibly ${\cal A}:= \{M\}$). Then there exists a covering $\mathcal{C}$ of $M$  sets such that,
\begin{itemize}
\item[(a)] ${\cal C}$ is a refinement of ${\cal A}$ (i.e.,  if $C \in \mathcal{C}$, then   $C \subset U_C\in {\cal A}$) made of  normal convex open sets;
\item[(b)]  if $C, C' \in \mathcal{C}$ and $C\cap C' \neq \emptyset$, then $C \cap C'$ is a normal convex open set. 
\end{itemize}
\end{proposition}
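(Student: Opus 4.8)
The plan is to reduce Proposition~\ref{propintconv} to two ingredients: a purely geometric lemma guaranteeing condition~(b), and the combination of Whitehead's basis theorem (Theorem~\ref{teo01}) with Stone's $^*$-refinement theorem (Theorem~\ref{teoparac}), which supplies a covering of the right topological shape.

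The geometric lemma I would establish first is: \emph{if $W$ is a normal convex open set and $C,C'\subset W$ are normal convex open sets with $C\cap C'\neq\emptyset$, then $C\cap C'$ is a normal convex open set.} To prove it, take $p,q\in C\cap C'$; the geodesic segments $\gamma^{(C)}_{pq}$ and $\gamma^{(C')}_{pq}$ of Remark~\ref{rem1}(1) both have image contained in $W$, so by the uniqueness clause of Remark~\ref{rem1}(1) applied \emph{inside} $W$ they both coincide with $\gamma^{(W)}_{pq}$, whose image therefore lies in $C\cap C'$. Fixing now $q\in C\cap C'$, the identity $\gamma^{(W)}_{q q'}(t)=\exp_q(t\,(\exp_q|_{V^{(W)}_q})^{-1}(q'))$ of Remark~\ref{rem1}(1) shows that $(\exp_q|_{V^{(W)}_q})^{-1}(C\cap C')$ is an open, starshaped neighbourhood of the origin of $T_qM$ on which $\exp_q$ restricts to a diffeomorphism onto $C\cap C'$; by Definition~\ref{defconvex} this exactly says $C\cap C'$ is normal convex. (This is the geometric content behind the sketch around Lemma~10 of Chapter~5 of \cite{ONeill}.)

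For the construction I would then perform three successive refinements. \emph{(i)} Since the normal convex open sets form a topological basis (Theorem~\ref{teo01}), pick for every $p\in M$ some $A_p\in\mathcal{A}$ with $p\in A_p$ and a normal convex open $W_p$ with $p\in W_p\subset A_p$; then $\mathcal{W}:=\{W_p\}_{p\in M}$ is a covering of $M$ by normal convex open sets refining $\mathcal{A}$. \emph{(ii)} Since $M$ is Hausdorff and paracompact, hence $T_1$, Theorem~\ref{teoparac} yields an open covering $\mathcal{C}^*$ of $M$ that is a $^*$-refinement of $\mathcal{W}$: for every $V\in\mathcal{C}^*$ there is $W_V\in\mathcal{W}$ with $\bigcup\{V'\in\mathcal{C}^*\,:\,V'\cap V\neq\emptyset\}\subset W_V$. \emph{(iii)} Using Theorem~\ref{teo01} once more, pick for every $p\in M$ some $V_p\in\mathcal{C}^*$ with $p\in V_p$ and a normal convex open $C_p$ with $p\in C_p\subset V_p$, and set $\mathcal{C}:=\{C_p\}_{p\in M}$. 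Finally I verify (a) and (b): property (a) holds because the $C_p$ are normal convex and $C_p\subset V_p\subset W_{V_p}\subset A$ for a suitable $A\in\mathcal{A}$ (taking $V'=V_p$ in the $^*$-refinement relation and using that $\mathcal{W}$ refines $\mathcal{A}$); property (b) holds because $C_p\cap C_q\neq\emptyset$ forces $V_p\cap V_q\neq\emptyset$, whence both $V_p\subset W_{V_p}$ and $V_q\subset W_{V_p}$ by the $^*$-refinement relation with $V=V_p$, so $C_p$ and $C_q$ both lie in the single normal convex set $W_{V_p}$ and the geometric lemma applies. The main obstacle is the geometric lemma itself, and within it the verification that the $\exp_q$-preimage of $C\cap C'$ is starshaped — the one place where uniqueness of the connecting geodesic inside $W$ is genuinely needed; the remainder is bookkeeping with nested refinements.
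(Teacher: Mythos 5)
Your proposal is correct and follows essentially the same route as the paper: a $^*$-refinement (Theorem~\ref{teoparac}) of a normal convex refinement of $\mathcal{A}$, followed by a further convex refinement, with condition~(b) secured by the observation that two intersecting members of the final covering sit inside a single normal convex set, inside which the unique connecting geodesic forces the intersection to be geodesically starshaped with respect to each of its points. The only cosmetic differences are that you index the refinements by points rather than taking \emph{all} normal convex subsets at each stage, and that you pull back $C\cap C'$ through $\exp_q|_{V_q^{(W)}}$ where the paper uses $\exp_p|_{V_p^{(C)}}$; both give the same starshaped preimage.
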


\begin{proof}  Using Theorem \ref{teo01}, consider the covering ${\cal C}_0$ made of all normal convex  neighbourhoods that  are subsets  of the elements of ${\cal A}$.
 Exploiting Theorem  \ref{teoparac}, consider a refinement ${\cal C}^*_0$ of ${\cal C}_0$ satisfying,
 for every $V \in {\cal C}^*_0$, $$\bigcup \{ V' \in {\cal C}^*_0 \:|\: V' \cap V\neq \emptyset\}\subset C_V\quad \mbox{for some $C_V \in {\cal C}_0$.}$$
 The proof concludes by defining ${\cal C}$ as the family of normal convex  neighbourhoods contained within  elements of $ 
{\cal C}^*_0$ so that (a) is in particular true by construction. To prove (b), we start by observing that,   if 
$C,C' \in  {\cal C}$, then $C\subset V$ and $C'\subset V'$ for some $V,V' \in {\cal C}^*_0$;  if furthermore $C\cap C' \neq\emptyset$, we conclude that $V\cap V' \neq \emptyset$ and thus $C\cup C' \subset V\cup V' \subset C_V$.  Property (b)  now  comes  easily using convex normality of $C_V$. First the intersection $C\cap C'$ is open. Next, if $p, q\in C\cap C'$, then the unique geodesic segment $\gamma: [0,1] \to C_V$ joining $p$ and $q$ is also  completely included in $C\cap C'$ since it must simultaneously stay in $C$ and $C'$, they being normal  convex as well. 
As a consequence, if  $p\in C\cap C'$, 
it necessarily  holds  $C\cap C' = \exp_p(V^{(C\cap C')}_p)$ for some star-shaped open neighbourhood  $V^{(C\cap C')}_p := (\exp_p|_{V^{(C)}_p})^{-1}(C\cap C')$
of  the origin of $T_pM$. Notice that  $\exp_p|_{V^{(C\cap C')}_p}: V^{(C\cap C')}_p \to C\cap C'$ is a diffeomorphism because it is the restriction of the  diffeomorphism $\exp_p|_{V^{(C)}_p}: V^{(C)}_p \to C$. In summary, $C\cap C'$ fulfils Definition \ref{defconvex} and the proof is over. \end{proof}

Collecting all results,  we are in a position to state the main theorem of this section,  concening the existence of 
strongly convex coverings in particular,
which also includes a (local) uniqueness statement.\\

\begin{theorem}\label{teo1} Let  $(M,g)$ be a smooth  (Hausdorff paracompact)  Riemannian or Lorentzian manifold, 
and ${\cal C}$ a strongly convex covering of $M$.
Then the following facts hold.
\begin{itemize}
\item[(a)] Defining  the open neighbourhood $\mathscr{A}\supset \Delta_M$ as in (\ref{defA}) with respect to  ${\cal C}$, the assigment of geodesic segments  (\ref{defsigma}) $$\Gamma: [0,1]\times \mathscr{A} \ni (t, p,q) \to \gamma_{p,q}(t) \in M\quad \mbox{where $\gamma_{p,q}(0)=p$ and 
$\gamma_{p,q}(1) = q$,}$$ 
and  the (signed) squared geodesic distance (\ref{defSIGMA}) $$\sigma(p,q)  := g_p(\dot{\gamma}_{p,q}(0), \dot{\gamma}_{p,q}(0)) \quad \mbox{for}\: (p,q) \in \mathscr{A}$$ 
are  well-defined and smooth on $\mathscr{A}$.
\item[(b)] If $\mathscr{A}' \supset \Delta_M$, $\Gamma' : [0,1]\times \mathscr{A}' \to M$, and $\sigma': \mathscr{A}'\to \bR$ is another triple as in (a) but constructed out of another strongly convex covering ${\cal C}'$, then there is an open set  $\mathscr{A}'' \subset M \times M$ such that 
\beq \mathscr{A}\cap \mathscr{A}' \supset \mathscr{A}'' \supset \Delta_M\:, \quad \Gamma|_{[0,1]\times \mathscr{A}''}=\Gamma'|_{[0,1]\times \mathscr{A}''}\:, \quad 
\sigma|_{\mathscr{A}''}= \sigma'|_{\mathscr{A}''}\:.\label{JJ}\eeq
\end{itemize}
\end{theorem}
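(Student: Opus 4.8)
The plan is to prove (a) essentially by recycling the reasoning already displayed in the paragraph preceding Definition of strongly convex covering, making the well-definedness checks explicit, and then to prove (b) by a gluing/overlap argument. For (a), first I would observe that $\mathscr{A}$ as defined in (\ref{defA}) is open (a union of open sets $U\times U$) and contains $\Delta_M$ since $\mathcal{C}$ is a covering of $M$. Then, given $(p,q)\in\mathscr{A}$, pick $U\in\mathcal{C}$ with $p,q\in U$ and set $\Gamma(t,p,q):=\Gamma_U(t,p,q)=\gamma^{(U)}_{p,q}(t)$. The key point is independence of the choice of $U$: if also $p,q\in U'$ with $U'\in\mathcal{C}$, then $U\cap U'\ni p,q$ and, by the defining property of a strongly convex covering, $U\cap U'$ is normal convex, so the unique geodesic in $U\cap U'$ joining $p,q$ is simultaneously the unique such geodesic in $U$ and in $U'$ (uniqueness being the content of Remark \ref{rem1}(1)); hence $\Gamma_U(t,p,q)=\Gamma_{U'}(t,p,q)$ for all $t\in[0,1]$. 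Smoothness on $\mathscr{A}$ is local, and on each $U\times U$ we already showed in the text after (\ref{defsigma}) that $(t,p,q)\mapsto\exp_p\bigl(t(\exp_p|_{V_p^{(U)}})^{-1}(q)\bigr)$ is smooth, so $\Gamma$ is smooth; then $\sigma$ in (\ref{defSIGMA}) is smooth as a composition of $\partial_t\Gamma|_{t=0}$ with the smooth metric $g$. This finishes (a).

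For (b), the strategy is to define $\mathscr{A}''$ as the set of pairs for which $\Gamma$ and $\Gamma'$ agree, show it is open, show it contains $\Delta_M$, and conclude. Concretely, I would take a common refinement idea: for each $p\in M$, choose $C\in\mathcal{C}$ and $C'\in\mathcal{C}'$ with $p\in C\cap C'$; then $W_p:=C\cap C'$ is open and, being the intersection of two normal convex sets, is itself normal convex (this follows from the same argument as in the proof of Proposition \ref{propintconv}(b): it is open, and for $x,y\in W_p$ the unique $C$-geodesic joining them stays in $C$ and in $C'$ hence in $W_p$, and $W_p=\exp_x(V_x^{(W_p)})$ for a starshaped neighbourhood, so Definition \ref{defconvex} holds). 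On $W_p\times W_p$ both $\Gamma$ and $\Gamma'$ coincide with the unique-in-$W_p$ geodesic assignment, by the argument of part (a) applied twice (once to $\mathcal{C}$ with $U=C\supset W_p$, once to $\mathcal{C}'$ with $U'=C'\supset W_p$, using that $W_p$ is convex and contained in both). Hence set $\mathscr{A}'':=\bigcup_{p\in M}W_p\times W_p$; this is open, contains $\Delta_M$, is contained in $\mathscr{A}\cap\mathscr{A}'$ (since $W_p\subset C$ gives $W_p\times W_p\subset C\times C\subset\mathscr{A}$, and similarly $\subset\mathscr{A}'$), and on it $\Gamma=\Gamma'$, whence also $\sigma=\sigma'$ by (\ref{defSIGMA}).

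The one subtlety worth spelling out — and the step I expect to be the mild obstacle — is the claim that on the convex set $W_p$ the restriction of $\Gamma$ (built from $\mathcal{C}$) equals the intrinsic geodesic assignment of $W_p$. This is exactly the coincidence of geodesic segments across nested convex sets used informally before Definition of strongly convex covering: if $W_p\subset C$ with both $W_p$ and $C$ normal convex, then for $x,y\in W_p$ the $C$-geodesic $\gamma^{(C)}_{x,y}$ has image in $W_p$ (it lies in $C$ and, by convexity of $W_p$, any geodesic between two of its points staying in $C$ must stay in $W_p$ — here one uses that $\gamma^{(W_p)}_{x,y}$ is a geodesic in $C$ from $x$ to $y$ so it must be $\gamma^{(C)}_{x,y}$ by the uniqueness in Remark \ref{rem1}(1)), so $\gamma^{(C)}_{x,y}=\gamma^{(W_p)}_{x,y}$ as parametrized curves on $[0,1]$. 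Since $\Gamma|_{W_p\times W_p}=\Gamma_C|_{W_p\times W_p}=\Gamma_{W_p}$ and likewise $\Gamma'|_{W_p\times W_p}=\Gamma_{W_p}$, the two agree. No deep topology beyond Proposition \ref{propintconv} is needed for (b); paracompactness enters only through the existence of the strongly convex coverings themselves in part (a).
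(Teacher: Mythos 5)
Your part (a) is correct and is essentially the paper's own argument (the paper simply points back to the discussion surrounding (\ref{defA})--(\ref{defSIGMA})). The genuine gap is in part (b), at the step where you declare $W_p := C\cap C'$ to be normal convex ``by the same argument as in the proof of Proposition \ref{propintconv}(b)''. That argument does not transfer. In Proposition \ref{propintconv}(b) the two convex sets are first shown to satisfy $C\cup C'\subset C_V$ for a \emph{single ambient normal convex set} $C_V$ --- this is precisely what the $^*$-refinement of Theorem \ref{teoparac} buys --- and only then does one argue that the unique $C_V$-geodesic joining $x,y\in C\cap C'$ must coincide, by uniqueness in $C_V$, with both the $C$-geodesic and the $C'$-geodesic, hence lie in $C\cap C'$. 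Without such an ambient convex set there is no reason why the $C$-geodesic $\gamma^{(C)}_{x,y}$ should stay in $C'$: the $C$-geodesic and the $C'$-geodesic joining the same pair may simply be different curves. The intersection of two normal convex sets in a semi-Riemannian manifold is in general \emph{not} normal convex (it need not even be connected), and the paper's own Example \ref{EX} on the cylinder exhibits exactly the dangerous configuration: two convex neighbourhoods both containing $x$ and $y$ whose respective geodesics, and hence whose values of $\sigma$, differ. Consequently your $\mathscr{A}''=\bigcup_p W_p\times W_p$ may contain pairs $(x,y)$ with $\Gamma(\cdot,x,y)=\gamma^{(C)}_{x,y}\neq\gamma^{(C')}_{x,y}=\Gamma'(\cdot,x,y)$, and the conclusion fails.

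The repair is the paper's route: form the open covering $\mathcal{C}_1:=\{C\cap C' \,:\, C\in\mathcal{C},\ C'\in\mathcal{C}',\ C\cap C'\neq\emptyset\}$ (its elements need not be convex --- Proposition \ref{propintconv} only requires an open covering as input) and apply Proposition \ref{propintconv} to obtain a strongly convex covering $\mathcal{C}''$ refining $\mathcal{C}_1$; then set $\mathscr{A}'':=\bigcup_{U''\in\mathcal{C}''}U''\times U''$. Each $U''$ is now a genuinely convex set contained in both some $C\in\mathcal{C}$ and some $C'\in\mathcal{C}'$, and your (correct) nested-convex-sets observation from the last paragraph yields $\Gamma_C=\Gamma_{U''}=\Gamma_{C'}$ on $U''\times U''$. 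So, contrary to your closing remark, part (b) requires a second pass through the paracompactness machinery, not merely the existence of the two strongly convex coverings.
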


\begin{proof} (a) 
If $\mathscr{A}$ is  as in  (\ref{defA}), $\Gamma :[0,1] \times  \mathscr{A} \to M$ defined as in  (\ref{defsigma}) 
and $\sigma : \mathscr{A} \to \bR$ defined as in (\ref{defSIGMA})
are well-defined and smooth 
as discussed  in the paragraph before  Eq. (\ref{defA}) and after  Eq. (\ref{defsigma}). (b) 
Define  a new covering ${\cal C}_1$ (a simultaneous refinement of ${\cal C}$ and ${\cal C}'$) made of  the sets $C\cap C'$, for all choices of   $C\in {\cal C}$ and $C'\in {\cal C}'$.
 According to Proposition \ref{propintconv}, define ${\cal C}''$ as a refinement of ${\cal C}_1$  made of normal convex neighborhoods  such that 
$U, U' \in {\cal C}''$ implies that $U\cap U'$ is empty or normal convex and define $\mathscr{A}'' := \cup_{U'' \in \mathcal{C}''} U''\times U''$.  By construction, both $\mathscr{A}'' \subset \mathscr{A}$ and $\mathscr{A}'' \subset \mathscr{A}'$. Moreover, 
if $x,y \in \mathscr{A}''$ then we have both $x,y \in U'' \subset U \in {\cal C}$ and  
$x,y \in U'' \subset U' \in {\cal C}'$, so that  $\Gamma(t,x,y) = \Gamma_{U}(t,x,y) =\Gamma_{U''}(t,x,y) =  \Gamma_{U'}(t,x,y) = \Gamma'(t,x,y)$. The same fact holds for  $\sigma$ and $\sigma'$ in view of their definition (\ref{defSIGMA}) in terms of $\Gamma$ and $\Gamma'$.
\end{proof}
\begin{definition}
{\em A triple $(\mathscr{A}, \Gamma, \sigma)$ as in in (a) of Theorem \ref{teo1} is said to be {\bf subordinated} to ${\cal C}$.}\\
\end{definition}

\begin{remark} {\em Strongly convex coverings are not an
{\em ad-hoc} artifact for the proposal of this work, but a natural and commonly used  technical tool 
in Semi-Riemannian Geometry.  The  existence of this sort of geodesically convex coverings is a straightforward  fact in Riemannian Geometry (see the elementary version of the sketch of proof above when  $h=g$). The 
extension to Lorentzian manifolds  is however  not straightforward.  In addition to the topological approach of Proposition \ref{propintconv},  a purely geometric (in this sense perhaps more natural) proof of existence of strongly convex coverings for a Lorentzian geometry 
can be obtained along the following construction\footnote{The author is grateful to an anonymous referee for the comments contained in this remark,  and to M. S\`anchez  for the idea of the alternative existence proof  of strongly convex coverings sketched here.}.
 (What follows is however valid, with the same proof, when referring to the geodesic flow of a smooth affine connection $\Gamma$ on $M$ which is not the Levi-Civita connection of some metric.)

 Let $(M,g)$ be a smooth  connected (Hausdorff 2nd countable) Lorentzian manifold.
\begin{itemize}

\item[(i)] Let  $h$ be an auxiliary Riemannian metric on $M$ (which exists as a consequence of elementary results in Riemannian geometry \cite{KN}). 
$h$ can be chosen in order that the Riemannian manifold  $(M,h)$ is complete \cite{KO} so that  the $h$-injectivity radius at a given point $p\in M$ is a continuous function of $p$ (see, e.g., Prop. 10.37 in \cite{Lee}).
Consider the atlas of Riemannian normal coordinates $(U_p,\psi_p)$ centered on every $p\in M$ and referred to the Riemannian metric $h$.  

\item[(ii)] Following the classic proof of the Whitehead theorem on the existence of the topological basis of convex normal neighborhoods of $g$ \cite{KN}, one sees that every  $\psi_p$-coordinate ball
$B^h_r(p)$ with center $p$ is $g$-normal convex if the radius $r_p$ is sufficiently small. This would happen if referring to any atlas on $M$, but in the considered case the balls $B^h_r(p)$ are also geodesical balls with respect to $h$ and they are normal and convex with respect to $h$ for sufficiently small $r_p$. As is known, these balls are also metric balls of the natural metric space on $M$ induced by $h$
(the distance $d(p,q)$ is the inf of the $h$-length of the smooth curves joining $p$ and $q$).

\item[(iii)]  Then, one can choose a continuous function $\mu: M\to (0,+\infty)$ -- with
 $2\mu$ smaller than the $h$-radius of injectivity at each point
 $p\in M$ -- such that the ball $B^h_r(p)$  with
 $0<r \leq 2 \mu(p)$ is $g$-normal convex for all $p\in M$.
\end{itemize}
A strong $g$-convex covering is made of  the family of all the balls
 $B^h_{\mu(p)}(p)$, $p\in M$. Indeed,  from elementary properties 
of balls in metric spaces,
the intersection (assumed to be non-empty) of a pair of such 
  balls with centers $p_1$ and $p_2$ is  included in the ball of the center $p_1$
 and (bigger) radius $2\mu(p)$. This ball  is $g$-convex by construction.}
\end{remark}
\section{An issue with the global Hadamard condition and Hadamard parametrix}\label{secHcond}
Before addressing another  issue still related to the well-definedness of $\sigma$ and associated structures,  we  summarize the relevant notions introduced  by  the milestone paper \cite{KW} where, for the first time, a rigorous definition of a Hadamard state was proposed and used by B.S. Kay and R.M. Wald. The definition was used  in \cite{KW}  (relying on previous work as \cite{FSW} and \cite{Kay}) to establish some important uniqueness results of QFT on a spacetime  equipped with a bifurcate Killing horizon  related to the KMS states  of a real Klein-Gordon scalar field  with the Hawking temperature.  However, the  definition of Hadamard state discussed therein  applies to every (four-dimensional) globally hyperbolic spacetime. The notion of Hadamard states in Kay-Wald's approach relies upon the notion of {\em Hadamard parametrix}. The Hadamard condition on states
can be nowadays formulated without a Hadamard parametrix using microlocal techniques as we shall briefly discuss later. It is however worth stressing that  the notion of Hadamard parametrix remains a crucial technical  tool for  the construction of other important mathematical objects in QFT as the {\em Wick powers} in the {\em locally covariant formulation} (see in particular \cite{HW,KM}).

\subsection{Hadamard states according to \cite{KW}}  

If $(M,g)$ is a time-oriented smooth  spacetime and  $x,y \in M$,
\beq J(x,y):= (J^{-}(x) \cap J^+(y)) \cup (J^{-}(y) \cap J^+(x))\:,\label{JJ23}\eeq
where $J^\pm(S)$ are defined as in \cite{ONeill}. We say that $x,y$ are  {\bf causally related} in $(M,g)$  if $J(x,y)\neq \emptyset$.
We henceforth assume that $(M,g)$ is four dimensional and globally hyperbolic.\\

\begin{remark} \label{remJ} {\em  If $x,y\in M$ are causally related in a globally-hyperbolic spacetime $(M,g)$, then  there is a causal geodesic segment joining them in view of Proposition 19 in Chapter 14 of \cite{ONeill}. This fact has a crucial consequence.
If $x,y$ are causally related 
and both $U \supset J(x,y)$, $U' \supset J(x,y)$ for  convex normal  neighbourhoods $U,U'$,
then   $\sigma_U(x,y) =\sigma_{U'}(x,y)$.   Indeed, the unique geodesic segments parametrized on $[0,1]$ connecting $x$ and $y$ respectively in $U$ and $U'$ must  belong to $J(x,y) \subset U\cap U'$ and thus they must coincide.  
This  fact is throughout   exploited in \cite{KW}  and provides a well-defined notion of signed squared geodesic distance $\sigma(x,y)$ on the subset of $M\times M$
$$\mathscr{Z}_M := \{ (x,y)\in M\times M \:|\:  x,y \: \mbox{causally related}\:, \:   J(x,y)\subset U, \: U\:  \mbox{normal convex neighbourhood}\}\:.$$}
\end{remark}

The definition of Hadamard state according \cite{KW} passes through the following four steps.
\begin{itemize}
\item[{\bf (H1)}]
The so-called (global) {\bf Hadamard parametrix} is defined in \cite{KW}, for every natural $n$ and $\epsilon>0$, as
\beq\label{Gor}
G^{T,n}_\epsilon(x,y) := \frac{1}{(2\pi)^2}\left[\frac{\Delta(x,y)^{1/2}}{\sigma(x,y) + 2i\epsilon t(x,y) + \epsilon^2}
+v_n(x,y) \ln (\sigma(x,y) +2i\epsilon t(x,y) +\epsilon^2) \right], \quad (x,y) \in \mathscr{O}\:.
\eeq
Above  $\mathscr{O}\supset \mathscr{Z}_M$ is an open set {\em supposed to exist} where $\sigma$ and $G^{T,n}_\epsilon$ are  well defined, $t(x,y) := T(x)-T(y)$, where the smooth function  $T: M \to \bR$ is a {\em Cauchy temporal function}\footnote{I.e., $dT \neq 0$  is  everywhere past-directed and $T^{-1}(r)$ is a smooth spacelike Cauchy surface for every $r\in \bR$.} increasing towards the future, the branch cut of the logarithm is taken along the negative real axis, and the function $\Delta(x,y)$ and $v_n(x,y)$ are known and  defined in terms  of $\sigma(x,y)$ and known recursion  integrals along the geodesic segment $\gamma_{xy}$ connecting $x$ and $y$ (see, e.g., Appendix A of \cite{Stress} and \cite{HM}). \\

\begin{remark} \label{remvs}{\em If $\sigma(x,y)$ and the geodesic segment $\gamma_{xy}$ connecting $x$ and $y$  are well defined  in some  neighborhood, then $\Delta(x,y)$ and $v_n(x,y)$ are completely determined in that neighbourhood.
This happens in particular for $x,y \in U$ with $U$ normal convex neighbourhood.}\\
\end{remark}

\item[{\bf (H2)}] Following  \cite{KW}, given a globally hyperbolic spacetime $(M,g)$ with a time orientation and a smooth spacelike Cauchy surface $\Sigma$, a {\bf normal neighbourhood} $N$ of $\Sigma$ is an open set including $\Sigma$ and such that
\begin{itemize}
\item[(a)] $(N,g|_N)$ is a globally hyperbolic spacetime and $\Sigma$ is a Cauchy surface of it;
\item[(b)] $(x,y) \in N\times N$ are causally related in $(M,g)$ iff  $(x,y) \in \mathscr{Z}_M$. 
\end{itemize}
Lemma 2.2 of \cite{KW} proves the existence of a normal neighbourhood of any given Cauchy surface $\Sigma$. 

\item[{\bf (H3)}] Consider an open set $\mathscr{O}' \subset N\times N$ which includes $\mathscr{Z}_M \cap (N \times N)$ (i.e., the set of causally related pairs $(x,y) \in N\times N$) and such that its closure in $N\times N$ satisfies $\overline{\mathscr{O}'}^{N\times N}\subset \mathscr{O}$. Finally,
$\chi: N\times N \to \bR$ is a smooth function such that $\chi(x,y)=1$ for $(x,y) \in \overline{\mathscr{O}'}^{N\times N}$ and $\chi(x,y)=0$ 
for $(x,y) \not\in \mathscr{O}\cap (N\times N)$.
\item[{\bf (H4)}] 
With $\mathscr{O}$, $N$, $T$, $\chi$ as above, we can state  the definition of Hadamard state. 
 \begin{definition} \label{defHSC} {\em An algebraic  state  $\omega$ on the (Weyl $C^*$ or $^*$) algebra of a real scalar Klein-Gordon field on $(M,g)$ is said to be {\bf globally Hadamard} according to \cite{KW} if the associated 
{\em two-point function}, i.e., a certain bilinear map \cite{KW}  $\Lambda_\omega : C_0^\infty(M) \times C_0^\infty(M)\to \bC$,  satisfies the following requirement 
\beq
\Lambda_\omega(F_1,F_2) = \lim_{\epsilon \to 0^+} \int_{N\times N} \Lambda^{T,n}_\epsilon(x,y) F_1(x) F_2(y) d\mu_g(x) d\mu_g(y)\:, \quad 
\forall F_1,F_2 \in C_0^{\infty}(N)\:,
\eeq
where $\mu_g$ is the natural measure induced by $g$ on $M$ and
\beq\label{coc}
 \Lambda^{T,n}_\epsilon(x,y) = \chi(x,y) G^{T,n}_\epsilon(x,y)  + H^n(x,y)\:,
\eeq
for every natural $n$ and some associated functions $H^n\in C^n(N\times N)$.}
\end{definition}
\end{itemize}

\begin{remark}
{\em In \cite{KW}, it is proved  that Definition \ref{defHSC}  is independent of $\mathscr{O}, N, \chi, \Sigma$. Yet, that independence proof assumes at various steps that $G^{T,n}_\epsilon(x,y)$ is well defined, not only on $\mathscr{Z}_M$, but also on $\mathscr{O}$  (and $\mathscr{O}'$).  In particular,   $\sigma(x,y)$ is  expected to have the standard behaviour in $\mathscr{O}$: $\sigma(x,y) >0$ if $x\neq y$ are not causally related. More precisely, $\sigma(x,y)$ is supposed to  take the standard form $\sigma(x,x')=-(y^0(x'))^2 + \sum_{\alpha=1}^3 (y^\alpha(x'))^2$ in {\em Riemannian normal coordinates} $y^0,y^1,y^2,y^3$ centered at one of the entries (here $x$) also for non-causally related arguments.}\\
\end{remark}

Definition \ref{defHSC}  was later proved to be equivalent
 to a  certain {\em microlocal version} by a  famous  paper by  M. Radzikowski \cite{Rad}, when assuming  the   requirement $\Lambda_\omega \in {\cal D}'(M\times M)$ (see (2) in Theorem \ref{teoRAD} below). This second analytic version (extended to $n$-dimensional spacetimes with $n\geq 2$) is the one usually nowadays adopted in perturbative aQFT, also  including  cosmological applications, starting form semiclassical versions of the Einstein equations (see \cite{MPS} for a recent application). 
 The Hadamard parametrix plays a special role in the definition of locally-covariant Wick powers \cite{HW,KM} and in the study of   quantum energy inequalities \cite{FS08}.  Kay-Wald's version  of the Hadamard condition    has been used  by R. Verch to prove physically important properties 
of Hadamard states at algebraic level, like {\em local quasi equivalence} and {\em local definiteness} \cite{Verch}. 
 Using Kay-Wald's definition, Sahlmann and Verch \cite{SV}  extended the formalism to vector-valued quantum fields in a globally hyperbolic  spacetime of dimension $n\geq 2$. There,  also  the equivalent microlocal formulation has been discussed and an extension  of the {\em theorem of propagation of Hadamard singularity} has been established in the fashion of the original formulation \cite{FSW} of that property of Hadamard states.  We recommend \cite{book} for a recent account on the wide spectrum of applications of Hadamard states (a pedagogical  introduction 
 to quasifree Hadamard states and their relevance in aQFT
takes place
in  \cite{KM} therein). 

The specific use of the Hadamard condition in the study of Hawking radiation can  be traced back to \cite{FH}, already before that the precise form of the Hadamard parametrix was stated in \cite{KW}.
Though the microlocal version has been recently employed   in  applications to aQFT in black-hole background \cite{DMP,Sanders},
the originary \cite{KW} version of the Hadamard condition has continued  to  play a crucial role  to discuss  the Hawking  radiation \cite{Wald},  also in terms of a tunneling process   \cite{MP,CMP} (actually, those works only concern a local version of the Hadamard condition). See in particular the  recent interesting  work \cite{KNV} on  the Hawking radiation (and partially on the black-hole entropy) for a collapsing black-hole spherically-symmetric spacetime, where the global Hadamard condition has been used.

An interesting  global   definition of Hadamard state has been recently discussed in \cite{CDD}  in terms of pseudo differential operators and a different, but related, notion of {\em global parametrix} for globally hyperbolic spacetimes with compact Cauchy surfaces.

  \subsection{A gap in  the definition of $G^{T,n}_\epsilon$ and a proposal of solution}
The parametrix $G^{T,n}_\epsilon$ is evidently well defined on $\mathscr{Z}_M$, but there is no guarantee that it is also well defined on some open neighbourhood $\mathscr{O} \supset \mathscr{Z}_M$. Indeed,  the open set $\mathscr{O}$ must also contain pairs $(p,q)$ which are {\em not} causally related and each such pair  may be connected by many geodesic segments because Remark \ref{remJ} does not apply.  At this juncture, there is no explicit prescription to smoothly  choose a unique geodesic segment for every  such pair $(p,q)$ in order to have a well defined $\sigma(p,q)$, which, e.g.,  satisfies $\sigma(x,y)>0$ when  $x\neq y$ are not causally related. The problem also arises in the definition of $\Delta(p,q)$ and $v^n(p,q)$ as they are computed  using a geodesic segment joining $p$ and $q$ as said above. 

Instead of  attacking the problem directly by trying to establish the existence of a neighbourhood $\mathscr{O} \supset \mathscr{Z}_M$
where $\sigma$ and $G_\epsilon^{T,n}$ are well defined, we adopt a different strategy to circumvent the gap by  employing the achievements of Sect \ref{SectS}. The strategy relies on minimal modifications of original Kay-Wald's machinery. {\em For this reason,   in author's view, all important results established over the years  that  rely on Definition
\ref{defHSC}  (some of them quoted above) are  correct.}

Given a four-dimensional  globally hyperbolic spacetime $(M,g)$ with a time orientation,
choose a strong convex covering ${\cal C}$ of $M$, define the  triple $(\mathscr{A}, \Gamma, \sigma)$ subordinated to ${\cal C}$ as in Theorem \ref{teo1} and the set
$$\mathscr{Z}^{\cal C}_M := \{ (x,y)\in M\times M \:|\:  x,y \: \mbox{causally related}\:, \:   J(x,y)\subset U \in {\cal C}\}\:.$$
Notice that $\mathscr{A}$ is an open neighborhood of $ \mathscr{Z}^{\cal C}_M$  by construction.
\begin{itemize}
\item[{\bf (H1)'}]
Define a (global)  {\bf Hadamard parametrix subordinated to} ${\cal C}$, for every natural $n$ and $\epsilon>0$, as
\beq
G^{T,n,{\cal C}}_\epsilon(x,y) := \frac{1}{(2\pi)^2}\left[\frac{\Delta(x,y)^{1/2}}{\sigma(x,y) + 2i\epsilon t(x,y) + \epsilon^2}
+v^n(x,y) \ln (\sigma(x,y) +2i\epsilon t(x,y) +\epsilon^2) \right], \quad (x,y) \in \mathscr{A}\:.
\eeq
Above, $t(x,y) := T(x)-T(y)$, where $T: M \to \bR$ is global  smooth time function increasing towards the future, the branch cut of the logarithm is taken along the negative real axis, and the functions, $\sigma$, $\Delta$ and $v_n$ are  the ones constructed out of  $(\mathscr{A},\Gamma, \sigma)$ starting from ${\cal C}$.

\item[{\bf (H2)'}] Given a smooth spacelike Cauchy surface $\Sigma$ of $(M,g)$ (with dimension $\geq 2$), a {\bf normal neighbourhood} $N_{\cal C}$ of $\Sigma$ {\bf subordinated to} ${\cal C}$  is an open set including $\Sigma$ and such that
\begin{itemize}
\item[(a)] $(N_{\cal C} ,g|_{N_{\cal C}})$ is a globally hyperbolic spacetime and $\Sigma$ is a Cauchy surface of it;
\item[(b)] $(x,y) \in N_{\cal C}\times N_{\cal C} $ are causally related in $(M,g)$ iff $(x,y) \in \mathscr{Z}^{\cal C}_M$. 
\end{itemize}

\begin{lemma}
Given a strong convex covering of $M$, every smooth spacelike Cauchy surface of $(M,g)$ admits a normal neighbourhood subordinated to ${\cal C}$.
\end{lemma}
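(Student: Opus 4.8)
The plan is to follow the proof of Lemma~2.2 of \cite{KW} almost verbatim, the only change being that the Cauchy surface $\Sigma$ is thickened inside an open cover whose members are small causally convex diamonds, each contained in a single element of ${\cal C}$. This extra confinement is precisely what forces the causal double cones of causally related pairs in $N_{\cal C}$ to lie inside elements of ${\cal C}$, rather than inside arbitrary normal convex sets as in \cite{KW}.

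\emph{Step 1 (localisation inside ${\cal C}$).} For each $p\in\Sigma$ I would choose $C_p\in{\cal C}$ with $p\in C_p$ (possible since ${\cal C}$ covers $M$) and then pick $a_p\in I^-(p)$, $b_p\in I^+(p)$ close enough to $p$ that the open set $D_p := I^+(a_p)\cap I^-(b_p)$ satisfies $D_p\subset C_p$. This is possible because $M$ is globally hyperbolic: the compact sets $J^+(a)\cap J^-(b)$ decrease to $J^+(p)\cap J^-(p)=\{p\}$ as $a\to p$ from the past and $b\to p$ from the future (global hyperbolicity plus causality), while $C_p$ is open. Each $D_p$ is open, contains $p$, and -- the crucial point -- is \emph{causally convex in $M$}: if $x,y\in D_p$ and $z\in J^+(x)\cap J^-(y)$, then $a_p\ll x\le z$ gives $a_p\ll z$ and $z\le y\ll b_p$ gives $z\ll b_p$, so $z\in D_p$; the same computation covers $J^+(y)\cap J^-(x)$. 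Hence $J(x,y)\subset D_p\subset C_p$ for every causally related pair $x,y\in D_p$, so that $(x,y)\in\mathscr{Z}^{\cal C}_M$.

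\emph{Step 2 (thinning).} The family $\{D_p\cap\Sigma\}_{p\in\Sigma}$ is an open cover of $\Sigma$. I would feed it into the construction of Lemma~2.2 of \cite{KW}, which uses of its input cover only that it covers $\Sigma$ by open sets confining causal curves joining two of their points -- normal convexity there, causal convexity (which is even better suited) here. The outcome is an open set $N_{\cal C}$ with $\Sigma\subset N_{\cal C}\subset\bigcup_{p}D_p$ such that $(N_{\cal C},g|_{N_{\cal C}})$ is globally hyperbolic with $\Sigma$ a Cauchy surface, and such that any two points of $N_{\cal C}$ that are causally related in $M$ lie in a common $D_p$. Concretely one writes $M\cong\bR\times\Sigma$ for a Cauchy temporal function $\tau$ with $\tau^{-1}(0)=\Sigma$, takes a locally finite refinement of $\{D_p\cap\Sigma\}$ together with a shrinking of it, and sets $N_{\cal C}=\{(\tau,s):|\tau|<f(s)\}$ for a positive function $f$ on $\Sigma$ chosen, using local finiteness, small enough that a causal curve of $M$ with both endpoints in $N_{\cal C}$ cannot escape the relevant $D_p$; the details are those of \cite{KW}.

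\emph{Step 3 (verification).} Property (a) of (H2)$'$ is built into Step~2. For (b): if $(x,y)\in N_{\cal C}\times N_{\cal C}$ is causally related in $M$, Step~2 produces $p$ with $x,y\in D_p$, and then Step~1 gives $J(x,y)\subset D_p\subset C_p\in{\cal C}$, i.e. $(x,y)\in\mathscr{Z}^{\cal C}_M$; conversely every element of $\mathscr{Z}^{\cal C}_M$ is by definition a causally related pair of $M$, so the equivalence in (b) holds. I expect the only genuine work to be Step~2 -- checking, exactly as in \cite{KW}, that the slab $N_{\cal C}$ can be chosen simultaneously a neighbourhood of $\Sigma$, globally hyperbolic with $\Sigma$ as Cauchy surface, and thin enough to confine causally related pairs to individual $D_p$'s. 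Step~1, which is what makes the double cones land in ${\cal C}$, is elementary.
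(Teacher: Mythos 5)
Your proposal is correct and follows essentially the same route as the paper, whose proof consists precisely of the instruction to rerun Lemma~2.2 of \cite{KW} with all the covering neighbourhoods confined to elements of ${\cal C}$; your Step~1 (shrinking to causally convex diamonds $D_p\subset C_p\in{\cal C}$) simply makes explicit how that confinement yields $J(x,y)\subset C_p\in{\cal C}$ and hence property (b). The remaining work you defer to \cite{KW} is exactly what the paper defers as well.
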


\begin{proof} Use the same proof as the one of 
Lemma 2.2 of \cite{KW}  with the only difference that all the  used  normal convex neighbourhoods must be taken in ${\cal C}$.
\end{proof}

\item[{\bf (H3)'}] Consider an open set $\mathscr{A}' \subset N_{\cal C}\times N_{\cal C}$ which includes $\mathscr{Z}^{\cal C}_M \cap (N_{\cal C} \times N_{\cal C})$ (i.e., the set of causally related pairs $(x,y) \in N_{\cal C}\times N_{\cal C}$) and such that its closure in $N_{\cal C}\times N_{\cal C}$ satisfies $\overline{\mathscr{A}'}^{N_{\cal C}\times N_{\cal C}}\subset \mathscr{A} \cap (N_{\cal C}\times N_{\cal C})$.

\begin{remark} {\em $\mathscr{A}'$ does exist because $\mathscr{Z}^{\cal C}_M \cap (N_{\cal C} \times N_{\cal C})$ is closed in $N_{\cal C} \times N_{\cal C}$ (with the relative topology) and it is included in the open set $\mathscr{A} \cap (N_{\cal C} \times N_{\cal C})$.
(The set $\mathscr{Z}^{\cal C}_M \cap (N_{\cal C} \times N_{\cal C})$ of causally related points in $N_{\cal C}$  is closed in $N_{\cal C} \times N_{\cal C}$ because  $N_{\cal C}$ is globally hyperbolic and  Lemma 22 in Chapter 14 of \cite{ONeill} is valid\footnote{The proof appearing in Lemma 3.3 in \cite{Rad} of this fact  seems to be wrong or incomplete, because $\sigma$ is not necessarily defined in the target points of the considered sequences, though it grasps the  correct insight, with an appropriate use of  the {\em time separation function} $\tau$ in place of $\sigma$ as in  Chapter 14 of \cite{ONeill}.}.)}
\end{remark}

 Finally, taking advantage of the smooth Urysohn  lemma, choose a smooth function
$\chi: N_{\cal C}\times N_{\cal C} \to [0,1]$  such that $\chi(x,y)=1$ for $(x,y) \in \overline{\mathscr{A}'}^{N_{\cal C}\times N_{\cal C}}$ and $\chi(x,y)=0$ 
for $(x,y) \not\in \mathscr{A}\cap (N_{\cal C}\times N_{\cal C})$.

\item[{\bf (H4)'}] With ${\cal C}$, $N_{\cal C}$,$T$, $\chi$ as above, we can give the definition of Hadamard state. 
\begin{definition} \label{defNhad} {\em An algebraic  state  $\omega$ on the (Weyl $C^*$ or $^*$) algebra of a real scalar Klein-Gordon field on $(M,g)$ is said to be {\bf globally Hadamard}   if the associated 
{\em two-point function} $\Lambda_\omega : C_0^\infty(M) \times C_0^\infty(M)\to \bC$,  satisfies the following requirement 
\beq\label{Hc}
\Lambda_\omega(F_1,F_2) = \lim_{\epsilon \to 0^+} \int_{N_{\cal C}\times N_{\cal C}} \Lambda^{T,n,{\cal C}}_\epsilon(x,y) F_1(x) F_2(y) d\mu_g(x) d\mu_g(y)\:, \quad 
\forall F_1,F_2 \in C_0^{\infty}(N_{\cal C})\:,
\eeq
where $\mu_g$ is the natural measure induced by $g$ on $M$ and
\beq\label{tred}
 \Lambda^{T,n,{\cal C}}_\epsilon(x,y) = \chi(x,y) G^{T,n,{\cal C}}_\epsilon(x,y)  + H^n(x,y)\:,
\eeq
for every natural $n$ and some associated functions $H^n\in C^n(N_{\cal C}\times N_{\cal C})$.}
\end{definition}
\end{itemize}

\begin{remark}\label{remFIN} {\em An identity is of utmost  physical interest: two parametrices subordinated to different strong convex coverings are however identical 
and also coincide with $G^{T,n}_\epsilon(x,y)$ in (\ref{Gor})
$$G^{T,n,{\cal C}}_\epsilon(x,y)=G^{T,n,{\cal C}'}_\epsilon(x,y)=G^{T,n}_\epsilon(x,y)$$ 
{\em when evaluated 
on causally related points} $(x,y)\in N_{\cal C} \cap N_{{\cal C}'}$. In fact, in the said hypothesis, it simultaneously holds
$(x,y) \in J(x,y) \subset C\in {\cal C}$ and $(x,y) \in J(x,y) \subset C'\in {\cal C}'$ and thus, according to Remark \ref{remJ}, the geodesic segments joining $x$ and $y$ in $C$ and $C'$, respectively, coincide.
Finally the parametrices coincide as well in view of Remark \ref{remvs}. 
What happens to $G^{T,n,{\cal C}}_\epsilon(x,y)$ for non-causally related points is physically irrelevant  and it  permits  an arbitrary choice of the function $\chi$ appearing in  $\chi(x,y)G^{T,n,{\cal C}}_\epsilon(x,y)$. A change of the function $\chi$  can be reabsorbed  in a change  of the functions $H^n$. That  is a consequence of the fact that, 
for $x\neq y$ non-causally-related, $\sigma(x,y)>0$  and no singularity (for $\epsilon \to 0^+$) shows up in the parametrix $G^{T,n,{\cal C}}_\epsilon(x,y)$. In other words, the parametrix viewed as a distribution is actually a smooth function for non-causally-related arguments.
All that was discussed and  clearly emphasized in \cite{KW}  referring to the parametrix $G^{T,n}_\epsilon$. Unfortunately these properties of 
$G^{T,n}_\epsilon$ rely also on a good behavior of $\sigma$ in the whole open neighborhood $\mathscr{O}$ (and $\mathscr{O}'$) which is not proved to exist.}
\end{remark}
\subsection{Independence of the choices of  ${\cal C}, N_{\cal C}, T, \chi$ and nice interplay with the microlocal formulation}
What remains to be demonstrated  is that the given definition of Hadamard state does not depend on the choice of ${\cal C},  N_{\cal C}, T, \chi$ and it corresponds to the microlocal formulation \cite{Rad}.
\cite{Rad}  aimed to establish that a  state of a real  Klein-Gordon field
 in a globally hyperbolic  spacetime $(M,g)$  is Hadamard in the sense of \cite{KW} if and only if  it satisfies the 
{\em microlocal spectral condition} (\ref{SC}) below. (Actually  it was done  when also assuming the fair hypothesis that  the two-point function $\Lambda$ is a distribution of ${\cal D}'(M\times M)$.)  As a matter of fact, this result gave rise to an alternative  definition of Hadamard state.

The presentation of the Hadamard condition in the original sense of \cite{KW} in \cite{Rad,Rad2} is affected by the issue pointed out above  (in the proof of Lemma 3.1  in \cite{Rad} in particular) since  \cite{Rad} includes a faithful summary of relevant ideas and notions appearing in  \cite{KW}.

We argue that the statement of  Theorem 5.1 in \cite{Rad} which establishes the equivalence of the two formulations {\em is however valid when assuming our  definition of Hadamard state according to (H1)'-(H4)'}.  Let us re-state here part of  Radzikowski's equivalence theorem (excerpt  from Theorem 5.1 \cite{Rad} with notations adapted to our paper).\\

\begin{theorem} \label{teoRAD} Let $(M,g)$ be  a smooth, time oriented, four-dimensional globally hyperbolic spacetime and $\Lambda \in {\cal D}'(M\times M)$, define the  Klein-Gordon operator $P:= -\Box + m^2: C^\infty(M) \to C^\infty(M)$ for some real valued $m^2\in C^\infty(M)$. Choose ${\cal C}, T, \chi, N_{\cal C}$ as above. Then the following conditions are equivalent.
\begin{itemize}
\item[(1)]  $\Lambda$ 
satisfies what follows. \begin{itemize}\item[(a)]  The global Hadamard condition in Definition \ref{defNhad} (referring to the given choice of ${\cal C}, T, \chi, N_{\cal C}$), \item[(b)]  its antisymmetric part is $\frac{i}{2}(\Delta_A - \Delta_R )$  (where $\Delta_{A/R}: C_0^\infty(M) \to C^\infty(M)$ are the advanced/retarded Green operators of $P$), \item[(c)]  $\Lambda(PF\otimes F')= \Lambda(F\otimes PF')=0$ for all real-valued  $F,F' \in C_0^\infty(M)$.
\end{itemize}
\item[(2)] $\Lambda$ satisfies what follows.
\begin{itemize}
\item[(a)']  The {\bf microlocal spectral condition}
\beq WF(\Lambda) = \{ ((x_1,k_1), (x_2,k_2) \in T^*M \setminus {\bf 0} \times T^*M \setminus {\bf 0} \:|\:  (x_1,k_1), (x_2,-k_2)\:, k_1 \rhd 0 \}\:, \label{SC} \eeq
\item[(b)]  its antisymmetric part is $\frac{i}{2}(\Delta_A - \Delta_R )$, \item[(c)]  $\Lambda(PF\otimes F')= \Lambda(F\otimes PF')=0$
for all real-valued  $F,F' \in C_0^\infty(M)$.
\end{itemize}
\end{itemize}
((b) and (c) are in particular  valid if $\Lambda= \Lambda_\omega$ for an algebraic state $\omega$ on the (Weyl $C^*$ or $^*$) algebra of a real Klein Gordon quantum field.)\\
Equivalence of (1) and (2)  is still valid with the following changes. (b) and (c) in (1) are true  mod $C^\infty$,  (b) in (2) is  true  mod $C^\infty$, and (c) is omitted from (2).\\
\end{theorem}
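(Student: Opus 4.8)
The plan is to run the proof of Theorem~5.1 of \cite{Rad} (together with the analysis of distinguished parametrices in \cite{Rad2}) essentially verbatim, replacing throughout the Kay--Wald data $\mathscr{O},\mathscr{O}',N,G^{T,n}_\epsilon$ by the data $\mathscr{A},\mathscr{A}',N_{\cal C},G^{T,n,{\cal C}}_\epsilon$ of (H1)$'$--(H4)$'$. This substitution is legitimate --- and it closes the gap of Section~\ref{secHcond} --- for two reasons. First, $N_{\cal C}$ and $\mathscr{A}'$ genuinely exist (the Lemma and the Remark accompanying (H2)$'$--(H3)$'$), and, crucially, $\sigma,\Gamma,\Delta,v^n$, hence $\chi G^{T,n,{\cal C}}_\epsilon$ (extended by zero outside $\mathscr{A}\cap(N_{\cal C}\times N_{\cal C})$), are well defined and smooth on the \emph{genuine} open neighbourhood $\mathscr{A}\supset\mathscr{Z}^{\cal C}_M$ of the diagonal, by Theorem~\ref{teo1}: nothing is assumed to exist that is not constructed. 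Second, by Remark~\ref{remFIN}, $G^{T,n,{\cal C}}_\epsilon$ coincides with the classical parametrix $G^{T,n}_\epsilon$ on causally related pairs, whereas on non-causally related pairs it is, in the limit $\epsilon\to0^+$, a \emph{smooth} function of $(x,y)$ (there $\sigma>0$ and $t$ is real). Hence every estimate and every wave-front-set computation of \cite{Rad} involving causally related arguments is unchanged, while everything occurring on non-causally related arguments --- including the freedom in $\chi$ and in the choice of ${\cal C}$, and any $C^\infty$ discrepancy --- is absorbed into a redefinition of the $C^n$ remainders $H^n$ and is invisible to the singularity analysis. As a by-product, since condition (2) refers to none of ${\cal C},T,\chi,N_{\cal C}$, proving the equivalence simultaneously shows Definition~\ref{defNhad} to be independent of those choices.

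For the implication (1)$\Rightarrow$(2) only the wave-front set of the parametrix is needed (conditions (b),(c) being literally common to (1) and (2)). Since $WF$ is local and, for each $U\in{\cal C}$, the restriction of $G^{T,n,{\cal C}}_\epsilon$ to $U\times U$ is by construction the classical Hadamard parametrix built from $\sigma_U$ (and the $\Delta,v^n$ it determines, Remark~\ref{remvs}), the computation of $WF$ of $\lim_{\epsilon\to0^+}\chi G^{T,n,{\cal C}}_\epsilon$ reduces to the standard local calculation, in a single convex normal neighbourhood, of the wave-front set of $\Delta^{1/2}(\sigma+2i\epsilon t+\epsilon^2)^{-1}+v\ln(\sigma+2i\epsilon t+\epsilon^2)$ (see \cite{Rad}). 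The sets $U\times U$, $U\in{\cal C}$, cover a neighbourhood of the singular support of the parametrix, which is contained in $\mathscr{Z}^{\cal C}_M$ (the diagonal together with the causally related pairs), where $\chi\equiv1$; so $WF(\lim_{\epsilon\to0^+}\chi G^{T,n,{\cal C}}_\epsilon)\subseteq W$, the set on the right-hand side of (\ref{SC}). As $\Lambda-\lim_{\epsilon\to0^+}\chi G^{T,n,{\cal C}}_\epsilon=H^n\in C^n(N_{\cal C}\times N_{\cal C})$ for every $n$, the standard limiting argument gives $WF(\Lambda)\subseteq W$; the reverse inclusion follows from (b), the commutator $\Delta_A-\Delta_R$ having wave-front set $W\cup W'$ with $W'$ the reflection $k\mapsto-k$ of $W$, $W\cap W'=\emptyset$, and $WF(\Delta_A-\Delta_R)\subseteq WF(\Lambda)\cup WF(\Lambda^{\mathrm t})\subseteq WF(\Lambda)\cup W'$.

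For (2)$\Rightarrow$(1) it suffices, again because (b),(c) are common, to deduce the Hadamard form (a) from (a'),(b),(c). There exists at least one state $\omega_0$ satisfying (2): this is the usual deformation argument on a globally hyperbolic spacetime (cf.\ \cite{FSW,KW,Rad}), which is purely geometric/microlocal and untouched by the parametrix issue. Such $\omega_0$ satisfies (1): near a Cauchy surface its two-point function can be put in the form (\ref{tred}) with the global $2i\epsilon t+\epsilon^2$ prescription --- a local Hadamard-form computation combined with Radzikowski's identification of $G^{T,n,{\cal C}}_\epsilon$ as \emph{the} distinguished parametrix whose wave-front set is $W$ (\cite{Rad2}), condition (a)$'$ forcing precisely that prescription globally and independently of $T$. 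For a general $\Lambda$ satisfying (2), the difference $D:=\Lambda-\Lambda_{\omega_0}$ has vanishing antisymmetric part by (b), hence is symmetric and real on real test functions, so $WF(D)$ is invariant under $k\mapsto-k$; but $WF(D)\subseteq WF(\Lambda)\cup WF(\Lambda_{\omega_0})=W$ and $W\cap W'=\emptyset$, so $WF(D)=\emptyset$, i.e.\ $D\in C^\infty(M\times M)$; reabsorbing $D$ into the remainders $H^n$ of $\Lambda_{\omega_0}$ exhibits $\Lambda$ in the form of Definition~\ref{defNhad}. The ``mod $C^\infty$'' variant follows by carrying $C^\infty$ errors through the identical argument (and noting that (a)$'$ and (b) force (c) mod $C^\infty$, since $PG^{T,n,{\cal C}}_\epsilon$ is smooth in the limit $n\to\infty$), none of which affects any wave-front-set statement.

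The main obstacle is bookkeeping rather than anything conceptual: one must verify that every intermediate lemma of \cite{Rad} used above --- in particular the reduction of the Kay--Wald condition to a local Hadamard form and the comparison of two distinguished parametrices with equal wave-front set --- goes through with $G^{T,n,{\cal C}}_\epsilon$ in place of $G^{T,n}_\epsilon$, which it does precisely by the two facts isolated in the first paragraph. The one place that genuinely needs an argument different from \cite{Rad} is the closedness of $\mathscr{Z}^{\cal C}_M\cap(N_{\cal C}\times N_{\cal C})$ in $N_{\cal C}\times N_{\cal C}$, which must be obtained from the time-separation function $\tau$ and Lemma~22 of Chapter~14 of \cite{ONeill} instead of from $\sigma$ (undefined at limit points of the relevant sequences), exactly as in the Remark following (H3)$'$. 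With these checks in place, Radzikowski's argument yields the stated equivalence.
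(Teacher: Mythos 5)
Your proposal is correct and follows essentially the same route as the paper: the paper's own proof is precisely the observation that Radzikowski's argument for Theorem 5.1 of \cite{Rad} goes through verbatim once $\mathscr{O},\mathscr{O}',N,G^{T,n}_\epsilon$ are replaced by $\mathscr{A},\mathscr{A}',N_{\cal C},G^{T,n,{\cal C}}_\epsilon$, the substitution being legitimate because the subordinated triple actually exists and is smooth (Theorem \ref{teo1}), agrees with the classical parametrix on causally related pairs and is smooth off them (Remark \ref{remFIN}), and the $T$-independence proof of Appendix B of \cite{KW} carries over. The only difference is one of granularity: you additionally reconstruct the internal microlocal steps of \cite{Rad} (the $W\cap W'=\emptyset$ argument, the deformation construction of a reference state, the symmetric-smooth-difference argument), which the paper simply cites --- noting in a footnote that the last of these also closes the known gap in Radzikowski's original proof, a point you supply implicitly without flagging.
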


\noindent {\em Sketch of Proof}. The proof of  of Theorem 5.1 \cite{Rad}
uses both microlocal analysis arguments and some results from \cite{KW}.
Concerning definitions and facts established in \cite{KW}, it is assumed
that (i) the parametrix $G^{T,n}_\epsilon(x,y)$ has the known structure in terms of $\sigma$, $\Delta$, $v^n$ in a covering of normal convex neighbourhoods as stated in \cite{KW},   (ii) the Hadamard expansion is well-behaved on the  open neighborhood $\mathscr{O}'$ of the causally related points in $N\times N$, where $\sigma(x,y)>0$ for $x\neq y$ which are not causally related (more precisely it takes the standard form 
$\sigma(x,x')=-(y^0(x'))^2 + \sum_{\alpha=1}^3 (y^\alpha(x'))^2$ in  Riemannian normal coordinates $y^0,y^1,y^2,y^3$ centered at $x$), and  (iii)  the definition of Hadamard state according to \cite{KW} is independent from the choice of the Cauchy temporal function  $T$. 
 This  proof of independence appears  in Appendix B of \cite{KW} and it can be recast without changes for our  definition of Hadamard state based on the parametrix  $G^{T,n, {\cal C}}_\epsilon(x,y)$ and a normal neighborhood $N_{\cal C}$.
In summary, replacing $\mathscr{O}'$ for $\mathscr{A}'$, using the fact that $G^{T,n, {\cal C}}_\epsilon(x,y)$
has the same local structure as $G^{T,n}_\epsilon(x,y)$ in terms of $\sigma$ and the Hadamard expansion coefficients  and  {\em is} well-behaved on $\mathscr{A}$,  and exploiting independence of the definition from  the choice of $T$,
 the proof  of Theorem 5.1 \cite{Rad} is valid as it stands 
for Definition \ref{defNhad} of Hadamard state\footnote{As is known, the proof of Theorem 5.1 in \cite{Rad} has a gap.
It is  the content of the three lines immediately before the proof of (ii) 3 $\Rightarrow$ 2 on p.547.
This gap was  closed in several independent works, in particular (but not only)  \cite{SV} and \cite{KM}. In  the latter, only the microlocal analysis approach was exploited and thus without  relation with the issue with  \cite{KW} definition of Hadamard state. See Remark 23 in  \cite{KM} for a summary on this subject.}. \hfill $\Box$\\

\begin{corollary} The definition of Hadamard state in Definition \ref{defNhad} (based on (H1)'-(H4)' and assuming (b) and (c) of Theorem \ref{teoRAD} for $\Lambda_\omega$)  does not depend on the choices of  ${\cal C},  N_{\cal C}, T, \chi$ if $\Lambda_\omega \in {\cal D}'(M\times M)$.
\end{corollary}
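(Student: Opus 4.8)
The plan is to avoid any direct confrontation of the various auxiliary data $({\cal C},N_{\cal C},T,\chi)$ and instead to read the independence off the equivalence already recorded in Theorem \ref{teoRAD}. The crucial observation is that item (2) of that theorem --- the microlocal spectral condition (\ref{SC}) together with the prescription on the antisymmetric part of $\Lambda$ and the field equation in both entries --- mentions neither a strong convex covering, nor a normal neighbourhood, nor a Cauchy temporal function, nor a cutoff function: it is a statement about $\Lambda_\omega$ as an element of ${\cal D}'(M\times M)$ alone.

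With this in mind, the first step is to note that hypotheses (b) and (c) of Theorem \ref{teoRAD} hold for $\Lambda_\omega$ by assumption (they are the purely algebraic input, automatically satisfied by the two-point function of an algebraic state), and that they appear identically in (1) and in (2). The second step is to invoke Theorem \ref{teoRAD} once for \emph{each} admissible choice of data satisfying (H1)'--(H4)': every such invocation produces the equivalence ``Definition \ref{defNhad} holds for that choice $\Longleftrightarrow$ (2)'', always with the \emph{same} right-hand side (2). The third step is then a one-line transitivity argument: if $\omega$ is globally Hadamard for some admissible choice $({\cal C}_1,N_1,T_1,\chi_1)$, then $\Lambda_\omega$ satisfies (2), hence $\omega$ is globally Hadamard for every other admissible choice $({\cal C}_2,N_2,T_2,\chi_2)$; exchanging the roles of the two sets of data gives the converse, so the property ``$\omega$ is globally Hadamard'' is insensitive to the choice. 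That $\Lambda_\omega \in {\cal D}'(M\times M)$ is exactly the standing hypothesis making Theorem \ref{teoRAD} applicable, which is why it is recalled in the statement of the corollary.

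Essentially all of the substance is therefore pushed back into Theorem \ref{teoRAD}, whose validity for the modified Definition \ref{defNhad} is the object of the Sketch of Proof given just above the corollary. The only point requiring genuine care --- and hence the main (though modest) obstacle --- is to check that the Kay--Wald/Radzikowski equivalence really does transfer \emph{verbatim} once $G^{T,n}_\epsilon$ and $N$ are replaced by $G^{T,n,{\cal C}}_\epsilon$ and $N_{\cal C}$: concretely, that $G^{T,n,{\cal C}}_\epsilon$ has on $\mathscr{A}$ the same local structure in terms of $\sigma$, $\Delta$, $v^n$ that $G^{T,n}_\epsilon$ was merely assumed to have on $\mathscr{O}$, that $\sigma(x,y)>0$ for non-causally-related $x\neq y$ on the relevant neighbourhood, and that the $T$-independence argument of Appendix B of \cite{KW} carries over unchanged --- precisely the facts that Section \ref{SectS} and Remark \ref{remFIN} are designed to provide. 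One could instead attempt a direct proof following the original Kay--Wald independence arguments (varying ${\cal C}$, then $N_{\cal C}$, then $\chi$, then $T$ separately), but this would require re-running all of the delicate estimates there and is strictly less economical than the microlocal detour.
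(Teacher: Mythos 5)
Your proposal is correct and follows essentially the same route as the paper: the independence is read off from the fact that item (2) of Theorem \ref{teoRAD} makes no reference to ${\cal C}, N_{\cal C}, T, \chi$, so applying the equivalence once per admissible choice and chaining through (2) does the job. The only cosmetic difference is that the paper singles out the $T$-independence as having been established separately (via the Appendix B argument of \cite{KW} inside the proof of Theorem \ref{teoRAD}, where it is in fact needed as an input), whereas you fold it into the same microlocal detour as the other three choices; both readings are consistent with the theorem as stated.
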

\begin{proof}
Item (2) above does not depend on the choices of  ${\cal C}, N_{\cal C}, \chi$. (Independence from the choice of $T$ was independently established in the proof of  Theorem \ref{teoRAD} using the same proof as in \cite{KW}.)
\end{proof}

To conclude we observe that, following \cite{Rad2},  an algebraic  state $\omega$  on a Klein-Gordon quantum field on a spacetime $(M,g)$  is said to be {\bf locally Hadamard} if there is a (normal convex) neighborhood  $U$ of every point where, for every natural $n$, the two-point function of the state $\Lambda_\omega$ can be decomposed as in (\ref{coc}) (i.e., (\ref{tred})) with $\Lambda_\omega$ in place of $\Lambda^{T,n}$,  for $\chi=1$ and $H^n \in C^n(U\times U)$. It is possible to prove that a state $\omega$ such that $\Lambda_\omega \in {\cal D}'(M\times M)$ satisfies  (b) in Theorem \ref{teoRAD}  is locally Hadamard in a four-dimensional globally hyperbolic spacetime  if and only if it is globally Hadamard. It was established in Theorem 9.2 in  \cite{Rad2} using only  the microlocal definition (i.e., (2) in Theorem \ref{teoRAD}) of Hadamard state  and thus that result is valid also with our definition of (global) Hadamard state.

\section*{Acknowledgments}  I am  grateful to Franco Cardin, Claudio Dappiaggi,  Nicol\`o Drago,  Nicola Pinamonti, and  Miguel S\`anchez   for various  discussions over the years on  subjects somehow related to the content of this work. I thank Chris Fewster  and Miguel S\`anchez   for several technical discussions and suggestions about this paper.  I am grateful to the two anonymous referees for a number of suggestions and remarks of various nature which really helped  me improve the presentation of this work.
  I am finally grateful to C.  Fewster, F. Kurpicz, N. Pinamonti, and R. Verch who, directly or indirectly,  encouraged the author to write down this  quite technical note.

\end{document}